\theoremstyle{plain}
\newtheorem{proposition}{Proposition}
\newtheorem{lemma}{Lemma}
\newtheorem{theorem}{Theorem}
\newtheorem{corollary}{Corollary}
\theoremstyle{definition}
\newtheorem{definition}{Definition}
\newtheorem{algorithm}{Algorithm}
\theoremstyle{remark}
\newtheorem{remark}{Remark}
\newtheorem{example}{Example}
\newenvironment{algorithminit}[1]{\ \\{\em Initialization}: #1\begin{list}{\labelenumi}{\topsep0in\itemsep0in\parsep0in\labelwidth1in\usecounter{enumi}}}{\setcounter{enumii}{\value{enumi}}\end{list}}
\newenvironment{algorithmoper}[1]{{\em Operation}: #1\begin{list}{\labelenumi}{\topsep0in\itemsep0in\parsep0in\labelwidth1in\usecounter{enumi}\setcounter{enumi}{\value{enumii}}}}{\hfill$\blacksquare$\end{list}}
\begin{document}

\title{\huge A Distributed Algorithm for Solving Positive Definite Linear Equations over Networks with Membership Dynamics}
\author{Jie~Lu, \IEEEmembership{Member, IEEE}, and~Choon~Yik~Tang, \IEEEmembership{Member, IEEE}\thanks{This work was supported by the National Science Foundation under grant CMMI-0900806.}\thanks{J. Lu is with the School of Information Science and Technology, ShanghaiTech University, Shanghai 200031, China (e-mail: \mbox{lujie@shanghaitech.edu.cn}).}\thanks{C. Y. Tang is with the School of Electrical and Computer Engineering, University of Oklahoma, Norman, OK 73019 USA (e-mail: \mbox{cytang@ou.edu}).}}
\maketitle

\begin{abstract}
This paper considers the problem of solving a symmetric positive definite system of linear equations over a network of agents with arbitrary asynchronous interactions and membership dynamics. The latter implies that each agent is allowed to join and leave the network at any time, for infinitely many times, and lose all its memory upon leaving. We develop {\em Subset Equalizing} (SE), a distributed asynchronous algorithm for solving such a problem. To design and analyze SE, we introduce a novel time-varying Lyapunov-like function, defined on a state space with changing dimension, and a generalized concept of network connectivity, capable of handling such interactions and membership dynamics. Based on them, we establish the boundedness, asymptotic convergence, and exponential convergence of SE, along with a bound on its convergence rate. Finally, through extensive simulation, we show that SE is effective in a volatile agent network and that a special case of SE, termed {\em Groupwise Equalizing}, is significantly more bandwidth/\linebreak[0]energy efficient than two existing algorithms in multi-hop wireless networks.
\end{abstract}

\section{Introduction}\label{sec:intr}

\IEEEPARstart{S}{olving} a system of linear equations $Pz=q$ is a fundamental problem with numerous applications in science and engineering. In this paper, we address the problem of decentralizedly solving such equations over a network of $N$ agents, whereby each agent $i$ observes a symmetric positive definite matrix $P_i\in\mathbb{R}^{n\times n}$ and a vector $q_i\in\mathbb{R}^n$, and {\em all} of them wish to find the unique solution $z\in\mathbb{R}^n$ to
\begin{align}
\Bigl(\sum_{i=1}^NP_i\Bigr)z=\sum_{i=1}^Nq_i.\label{eq:sumPz=sumqintro}
\end{align}
The need to solve \eqref{eq:sumPz=sumqintro} arises in many applications of multi-agent systems. For instance, finding the maximum-likelihood estimate of an unknown parameter from noisy linear measurements in a wireless sensor network is equivalent to solving \eqref{eq:sumPz=sumqintro} over the network \cite{XiaoL05, XiaoL06}. Also, the widely studied average consensus problem (e.g., \cite{Tsitsiklis84, Olfati-Saber04, Boyd06, ChenJY06, Cortes06, Fagnani08, Olshevsky09, Oreshkin10, Tahbaz-Salehi10, LuJ12}) is a notable special case of \eqref{eq:sumPz=sumqintro} with $n=1$ and $P_i=1$ for all $i$.

Given its broad applications, problem \eqref{eq:sumPz=sumqintro} has received considerable attention in the literature. Most of the studies, however, focus on the special case of average consensus, as is evident by the rich collection of continuous-time (e.g., \cite{Olfati-Saber04, Cortes06}), discrete-time synchronous (e.g., \cite{Olfati-Saber04, Fagnani08, Olshevsky09, Oreshkin10, Tahbaz-Salehi10}), and discrete-time asynchronous (e.g., \cite{Tsitsiklis84, Boyd06, ChenJY06, Fagnani08, LuJ12}) algorithms that are available to date. Nonetheless, a few distributed algorithms devoted to the regular case of \eqref{eq:sumPz=sumqintro} with arbitrary $n$ and $P_i$'s have been proposed, including the continuous-time algorithm from \cite{Spanos05c}, which computes the solution $z$ to \eqref{eq:sumPz=sumqintro} by exploiting the positive definiteness of the $P_i$'s, and the two discrete-time synchronous, average-consensus-based algorithms from \cite{XiaoL05, XiaoL06}, which do so by element-wise averaging the $P_i$'s and $q_i$'s. Moreover, since problem \eqref{eq:sumPz=sumqintro} can be viewed as an unconstrained convex quadratic program, it may be solved using existing distributed convex optimization algorithms, including for example the quantized cyclic incremental \cite{Rabbat05}, subgradient-plus-consensus \cite{Nedic09}, primal-dual subgradient \cite{ZhuM12}, zero-gradient-sum \cite{LuJ12b}, adaptive penalty-based \cite{Towfic14}, and mixed-continuous/\linebreak[0]discrete-time \cite{Kia15} algorithms. There is also a related line of work that focuses on solving linear equations $Pz=q$ over an agent network, where each agent knows certain rows of $P$ and $q$ (e.g., \cite{MouS15}).

In this paper, we aim at solving the regular case of \eqref{eq:sumPz=sumqintro} over an agent network with dynamic memberships and topologies. Our development---from modeling to results---is substantially different from those in \cite{Spanos05c, XiaoL05, XiaoL06, Rabbat05, Nedic09, ZhuM12, LuJ12b, Towfic14, Kia15}, and generalizes some of those in \cite{Tsitsiklis84, Olfati-Saber04, Boyd06, ChenJY06, Cortes06, Fagnani08, Olshevsky09, Oreshkin10, Tahbaz-Salehi10, LuJ12} for average consensus. Specifically, we first introduce, in Section~\ref{sec:netwmodeprobform}, a novel agent network model that can handle arbitrary asynchronous interactions and membership dynamics, so that agents may freely interact with one another or spontaneously join and leave the network at any time, for infinitely many times. Unlike existing models in \cite{Tsitsiklis84, Olfati-Saber04, Spanos05c, XiaoL05, Boyd06, ChenJY06, Cortes06, XiaoL06, Fagnani08, Olshevsky09, Oreshkin10, Tahbaz-Salehi10, LuJ12, Rabbat05, Nedic09, ZhuM12, LuJ12b, Towfic14, Kia15} that require fixed agent memberships (i.e., graphs with fixed vertex sets), this model can handle dynamic ones, making it more general and allowing it to cope with practical situations, where agents may join or leave the network during runtime, either temporarily or permanently, voluntarily or involuntarily.

We next construct, in Section~\ref{sec:SE}, a distributed asynchronous algorithm named {\em Subset Equalizing} (SE) that enables the agents to cooperatively solve \eqref{eq:sumPz=sumqintro} despite having no control over their actions, essentially no knowledge about the network, and having to lose all their memories upon leaving the network. The algorithm SE is derived from a time-varying Lyapunov-like function that quantifies how far away the agents are from solving \eqref{eq:sumPz=sumqintro}, and from repeated minimization of this function in hope of incrementally dropping its value to zero. The algorithm is named SE because it is a networked dynamical system that evolves by repeatedly equalizing different subsets of its state variables. We also show that SE can be tailored to multi-hop networks with fixed vertex sets, leading to a gossip version called {\em Pairwise Equalizing} (PE) and a local broadcast version called {\em Groupwise Equalizing} (GE), which happen to generalize three existing average consensus schemes known as {\em Pairwise Averaging} \cite{Tsitsiklis84}, {\em Randomized Gossip Algorithm} \cite{Boyd06}, and {\em Distributed Random Grouping} \cite{ChenJY06}.

To analyze SE, we subsequently develop, in Section~\ref{sec:netwconn}, a few brand new notions of network connectivity---including instantaneous connectivity, connectivity, and uniform connectivity---which, unlike those in basic graph theory, are applicable to the agent network model (and which, together with the model, might be of interest in their own right). We also clarify these notions via examples and show that one of them generalizes a classic and widely used definition of connectivity for networks with fixed vertex sets and time-varying topologies, originally proposed in \cite{Tsitsiklis84}. Building upon these notions, we then derive, in Section~\ref{sec:bounconv}, sufficient conditions for establishing the boundedness, asymptotic convergence, and exponential convergence of SE, as well as a bound on its convergence rate. As a highlight of the results, we show that connectivity leads to asymptotic convergence, while uniform connectivity leads to exponential convergence.

As additional contributions of this paper, we demonstrate through simulation in Section~\ref{sec:simustud}, that SE is effective in a volatile agent network, while GE is several times more bandwidth/\linebreak[0]energy efficient than PE and the two algorithms from \cite{XiaoL05, XiaoL06} in multi-hop wireless networks. Finally, we state in Section~\ref{sec:conc} the conclusion of this paper. We note that this paper is an improved version of \cite{LuJ09, LuJ09b}. In addition, it contains in the Appendix all the proofs which are omitted in \cite{LuJ09, LuJ09b}. Throughout the paper, we let $\mathbb{N}$, $\mathbb{P}$, $\mathbb{S}_+^n$, and $|\cdot|$ denote, respectively, the sets of nonnegative integers, positive integers, $n\times n$ real symmetric positive definite matrices, and the cardinality of a set.

\section{Network Modeling and Problem Formulation}\label{sec:netwmodeprobform}

Consider a nonempty, finite set of $M\ge2$ agents, taking actions at each time $k\in\mathbb{N}$ according to the following model:
\begin{enumerate}
\renewcommand{\theenumi}{A\arabic{enumi}}\itemsep-\parsep
\item At time $k=0$, a nonempty subset $\mathcal{F}$ of the $M$ agents form a network and become {\em members} of the network.\label{enu:k=0}
\item Upon forming, each member $i\in\mathcal{F}$ observes a matrix $P_i\in\mathbb{S}_+^n$ and a vector $q_i\in\mathbb{R}^n$.\label{enu:obse}
\item The rest of the $M$ agents become {\em non-members} of the network and make no observations.\label{enu:k=0rest}
\item At each time $k\in\mathbb{P}$, three disjoint subsets of the $M$ agents---namely, a possibly empty subset $\mathcal{J}(k)$ of the non-members, a nonempty subset $\mathcal{I}(k)$ of the members, and a possibly empty, proper subset $\mathcal{L}(k)$ of the members---take actions~\ref{enu:join}--\ref{enu:leav} below.\label{enu:k=12}
\item The set $\mathcal{J}(k)$ of non-members join the network and become members.\label{enu:join}
\item Upon joining, the set $\mathcal{J}(k)\cup\mathcal{I}(k)\cup\mathcal{L}(k)$ of members interact, sharing information with one another and acknowledging their joining (i.e., $\mathcal{J}(k)$), staying (i.e., $\mathcal{I}(k)$), and leaving (i.e., $\mathcal{L}(k)$).\label{enu:inte}
\item Upon interacting, the set $\mathcal{L}(k)$ of members leave the network and become non-members.\label{enu:leav}
\item The rest of the $M$ agents (i.e., the complement of $\mathcal{J}(k)\cup\mathcal{I}(k)\cup\mathcal{L}(k)$) take no actions.\label{enu:k=12rest}
\end{enumerate}

Actions~\ref{enu:k=0}--\ref{enu:k=12rest} above define a general agent network model, where: (i) initially, an arbitrary subset of the agents form the network (i.e., \ref{enu:k=0}) and make {\em one-time} observations (\ref{enu:obse}), but the rest of them do not (\ref{enu:k=0rest}); (ii) at each subsequent time, arbitrary subsets of the agents (\ref{enu:k=12}) spontaneously join the network (\ref{enu:join}), interact with one another (\ref{enu:inte}), and leave the network (\ref{enu:leav}); and (iii) the agents take actions asynchronously (\ref{enu:k=12rest}). With this model, $M$ represents the maximum number of members the network may have, and each agent at any time is either a member or a non-member, but may change membership infinitely often. Labeling the $M$ agents as $1,2,\ldots,M$ and letting $\mathcal{M}(k)\subset\{1,2,\ldots,M\}$ denote the set of members upon completing the actions at each time $k\in\mathbb{N}$, the membership dynamics may be expressed as
\begin{align}\label{eq:M=McupJ-L}
\begin{split}
\mathcal{M}(0)&=\mathcal{F},\\
\mathcal{M}(k)&=(\mathcal{M}(k-1)\cup\mathcal{J}(k))-\mathcal{L}(k),\quad\forall k\in\mathbb{P},
\end{split}
\end{align}
where, since $\mathcal{F}\ne\emptyset$ and $\mathcal{L}(k)\subsetneq\mathcal{M}(k-1)$ $\forall k\in\mathbb{P}$, the network always has at least one member, i.e., $\mathcal{M}(k)\ne\emptyset$ $\forall k\in\mathbb{N}$. Moreover, since $\mathcal{J}(k)$ and $\mathcal{L}(k)$ may be empty for some $k\in\mathbb{P}$ but $\mathcal{I}(k)\ne\emptyset$ $\forall k\in\mathbb{P}$, while there may not always be membership changes, there are always member interactions, among the agents in $\mathcal{J}(k)\cup\mathcal{I}(k)\cup\mathcal{L}(k)$ $\forall k\in\mathbb{P}$. Since the membership dynamics and the member interactions are completely characterized by the sets $\mathcal{F}$, $\mathcal{J}(k)$, $\mathcal{I}(k)$, and $\mathcal{L}(k)$ $\forall k\in\mathbb{P}$, the network is driven by a sequence $\mathcal{A}$ of agent actions given by
\begin{align}
\mathcal{A}=(\mathcal{F},\mathcal{J}(1),\mathcal{I}(1),\mathcal{L}(1),\mathcal{J}(2),\mathcal{I}(2),\mathcal{L}(2),\ldots).\label{eq:A=FJILJIL}
\end{align}

\begin{remark}\label{rem:graph}
Although it is common to model networks using graphs, we use the sets $\mathcal{F}$, $\mathcal{J}(k)$, $\mathcal{I}(k)$, and $\mathcal{L}(k)$ $\forall k\in\mathbb{P}$ to model the above agent network because they enable convenient handling of the membership dynamics. We note that in the absence of membership changes (i.e., $\mathcal{J}(k)=\mathcal{L}(k)=\emptyset$ $\forall k\in\mathbb{P}$)---which is the de facto assumption in the literature---specifying $\mathcal{F}$ and $\mathcal{I}(k)$ $\forall k\in\mathbb{P}$ or $\mathcal{A}$ in \eqref{eq:A=FJILJIL} is the same as specifying an interaction graph.
\end{remark}

\begin{remark}\label{rem:leave}
Since $\mathcal{I}(k)\ne\emptyset$ $\forall k\in\mathbb{P}$, before leaving the network agents in $\mathcal{L}(k)$ always get to ``talk'' to someone who stays. This may be viewed as a limitation of the above agent network model because ``quiet'' departure of agents is not allowed.
\end{remark}

Given the agent network modeled by~\ref{enu:k=0}--\ref{enu:k=12rest}, the objective of this paper is to design and analyze a distributed asynchronous algorithm of iterative nature, which allows the ever-changing members of the network to cooperatively and asymptotically compute the constant solution $z\in\mathbb{R}^n$ of the following symmetric positive definite system of linear equations, defined by the one-time observations $P_i$ and $q_i$ $\forall i\in\mathcal{M}(0)$ of the initial members:
\begin{align}
\Bigl(\sum_{i\in\mathcal{M}(0)}P_i\Bigr)z=\sum_{i\in\mathcal{M}(0)}q_i.\label{eq:sumPz=sumq}
\end{align}
The algorithm should also exhibit the following desirable properties:
\begin{enumerate}
\renewcommand{\theenumi}{P\arabic{enumi}}\itemsep-\parsep
\item It should allow the sequence $\mathcal{A}$ of agent actions to be dictated by an exogenous source, for which the agents have no control over, since, for example, in a sensor network, $\mathcal{J}(k)$, $\mathcal{I}(k)$, and $\mathcal{L}(k)$ may be governed by sensor redeployment, reseeding, mobility, failures, and recoveries, all of which may be forced exogenously.\label{enu:exog}
\item It should allow the agents to not know the values of $M$, $k$, $\mathcal{F}$, $\mathcal{J}(k)$, $\mathcal{I}(k)$, $\mathcal{L}(k)$, and $\mathcal{M}(k)$ $\forall k\in\mathbb{P}$, since in many practical situations they are not available, or at least not known ahead of time.\label{enu:notknow}
\item It should not impose large memory requirements on the agents, and should allow them to lose all their memories upon leaving the network, since the departure may be caused by, for instance, software or hardware failures.\label{enu:memo}
\end{enumerate}

\section{Subset Equalizing}\label{sec:SE}

In this section, using ideas from Lyapunov stability theory and optimization, we construct an algorithm that possesses properties~\ref{enu:exog}--\ref{enu:memo} and strives to solve \eqref{eq:sumPz=sumq}.

Consider a networked dynamical system formed by the $M$ agents, in which each agent $i\in\{1,2\ldots,M\}$ maintains in its memory two state variables $z_i\in\mathbb{R}^n\cup\{\#\}$ and $Q_i\in\mathbb{S}_+^n\cup\{\#\}$, where $z_i$ represents its estimate of the unknown solution $z$ of \eqref{eq:sumPz=sumq}, $Q_i$ plays the part of helping $z_i$ approach $z$, and the symbol $\#$ means {\em undefined}. To describe the system dynamics, let $z_i(k)$ and $Q_i(k)$ be the values of $z_i$ and $Q_i$ upon completing the actions at each time $k\in\mathbb{N}$. Let $z_i(k)\in\mathbb{R}^n$ and $Q_i(k)\in\mathbb{S}_+^n$ if $i\in\mathcal{M}(k)$, and $z_i(k)=\#$ and $Q_i(k)=\#$ otherwise.

Next, we specify the evolution of the state variables $z_i(k)$ and $Q_i(k)$ $\forall i\in\mathcal{M}(k)$ $\forall k\in\mathbb{N}$. To this end, consider a time-varying Lyapunov-like function $V$ of the $z_i(k)$'s and $Q_i(k)$'s, defined for each $k\in\mathbb{N}$ as
\begin{align}
&V(k,z_1(k),z_2(k),\ldots,z_M(k),Q_1(k),Q_2(k),\ldots,Q_M(k))\nonumber\displaybreak[0]\\
&\quad=\sum_{i\in\mathcal{M}(k)}(z_i(k)-z)^TQ_i(k)(z_i(k)-z).\label{eq:V}
\end{align}
Note that, as the left-hand side of \eqref{eq:V} is lengthy, we write it as $V(k)$ in the sequel for brevity. Also, as the right-hand side of \eqref{eq:V} excludes all the non-members $i\in\{1,2,\ldots,M\}-\mathcal{M}(k)$, $V(k)\in\mathbb{R}$ is always well-defined. Furthermore, as the sum involves a time-varying subset of the $z_i(k)$'s and $Q_i(k)$'s, $V(k)$ is akin to a function defined on a state space with growing and shrinking dimension. Finally, although not a standard Lyapunov function candidate, $V(k)$ exhibits some similar features that make it useful for the problem at hand: $V(k)\ge0$ $\forall k\in\mathbb{N}$, with $V(k)=0$ if and only if $z_i(k)=z$ $\forall i\in\mathcal{M}(k)$, i.e., \eqref{eq:sumPz=sumq} is exactly solved. This explains why we define it as such and call it a Lyapunov-{\em like} function.

Having introduced $V(k)$, we now use it to devise the system dynamics. To begin, observe from~\ref{enu:k=12}--\ref{enu:k=12rest} and \eqref{eq:M=McupJ-L} that for each $k\in\mathbb{P}$, the members in $\mathcal{M}(k)$ can be partitioned into those in $\mathcal{M}(k)-(\mathcal{J}(k)\cup\mathcal{I}(k))$ who take no actions at time $k$, and those in $\mathcal{J}(k)\cup\mathcal{I}(k)$ who interact with the leaving members in $\mathcal{L}(k)$. For those in $\mathcal{M}(k)-(\mathcal{J}(k)\cup\mathcal{I}(k))$, as they gain no new information, their $z_i$'s and $Q_i$'s are unchanged, i.e.,
\begin{align}
z_i(k)&=z_i(k-1),\quad\forall i\in\mathcal{M}(k)-(\mathcal{J}(k)\cup\mathcal{I}(k)),\label{eq:z=z}\displaybreak[0]\\
Q_i(k)&=Q_i(k-1),\quad\forall i\in\mathcal{M}(k)-(\mathcal{J}(k)\cup\mathcal{I}(k)).\label{eq:Q=Q}
\end{align}
For those in $\mathcal{J}(k)\cup\mathcal{I}(k)$, they get to jointly determine $z_i(k)$ and $Q_i(k)$ $\forall i\in\mathcal{J}(k)\cup\mathcal{I}(k)$ based on $z_i(k-1)$ and $Q_i(k-1)$ $\forall i\in\mathcal{I}(k)\cup\mathcal{L}(k)$. To enable such determination, notice from \eqref{eq:V}, \eqref{eq:M=McupJ-L}, \eqref{eq:z=z}, and \eqref{eq:Q=Q} that the change in the value of $V$ is
\begin{align}
&V(k)-V(k-1)=\Bigl[\sum_{i\in\mathcal{J}(k)\cup\mathcal{I}(k)}z_i(k)^TQ_i(k)z_i(k)\Bigr]\nonumber\displaybreak[0]\\
&\quad-\Bigl[\sum_{i\in\mathcal{I}(k)\cup\mathcal{L}(k)}z_i(k-1)^TQ_i(k-1)z_i(k-1)\Bigr]\nonumber\displaybreak[0]\\
&\quad-2z^T\Bigl[\!\!\!\!\!\sum_{i\in\mathcal{J}(k)\cup\mathcal{I}(k)}\!\!\!\!\!Q_i(k)z_i(k)-\!\!\!\!\!\sum_{i\in\mathcal{I}(k)\cup\mathcal{L}(k)}\!\!\!\!\!Q_i(k-1)z_i(k-1)\Bigr]\nonumber\displaybreak[0]\\
&\quad+z^T\Bigl[\sum_{i\in\mathcal{J}(k)\cup\mathcal{I}(k)}Q_i(k)-\sum_{i\in\mathcal{I}(k)\cup\mathcal{L}(k)}Q_i(k-1)\Bigr]z.\label{eq:V-V}
\end{align}
Also note that $V(k)-V(k-1)$ in \eqref{eq:V-V} would be unaffected by the unknown $z$ and thus would be known to the members in $\mathcal{J}(k)\cup\mathcal{I}(k)$ if the to-be-determined variables are chosen such that the third and fourth brackets in \eqref{eq:V-V} disappear, i.e.,
\begin{align}
\sum_{i\in\mathcal{I}(k)\cup\mathcal{L}(k)}\!\!\!\!Q_i(k-1)z_i(k-1)&=\sum_{i\in\mathcal{J}(k)\cup\mathcal{I}(k)}\!\!\!\!Q_i(k)z_i(k),\label{eq:sumQz=sumQz}\displaybreak[0]\\
\sum_{i\in\mathcal{I}(k)\cup\mathcal{L}(k)}Q_i(k-1)&=\sum_{i\in\mathcal{J}(k)\cup\mathcal{I}(k)}Q_i(k).\label{eq:sumQ=sumQ}
\end{align}
Moreover, as the second bracket is fixed, $V(k)-V(k-1)$ in \eqref{eq:V-V} would be minimized, perhaps even made negative, by having those members jointly minimize the first bracket, i.e.,
\begin{align}
\begin{array}{cl}\displaystyle\operatornamewithlimits{minimize}_{(z_i(k),Q_i(k))_{i\in\mathcal{J}(k)\cup\mathcal{I}(k)}} & \displaystyle\sum_{i\in\mathcal{J}(k)\cup\mathcal{I}(k)}\!\!\!\!z_i(k)^TQ_i(k)z_i(k)\\ \operatorname{subject\,to} & \eqref{eq:sumQz=sumQz}\;\text{and}\;\eqref{eq:sumQ=sumQ}.\end{array}\label{eq:minsumzQz}
\end{align}

\begin{lemma}\label{lem:optnonincr}
For any $\mathcal{A}$ and $k\in\mathbb{P}$, $(z_i(k),Q_i(k))_{i\in\mathcal{J}(k)\cup\mathcal{I}(k)}$ is an optimal solution to problem \eqref{eq:minsumzQz} if and only if $Q_i(k)$ $\forall i\in\mathcal{J}(k)\cup\mathcal{I}(k)$ satisfy \eqref{eq:sumQ=sumQ} and
\begin{align}
z_i(k)=\Bigl(\!\!\!\!\!\sum_{j\in\mathcal{I}(k)\cup\mathcal{L}(k)}\!\!\!\!\!Q_j(k-1)\Bigr)^{-1}&\!\!\!\!\!\sum_{j\in\mathcal{I}(k)\cup\mathcal{L}(k)}\!\!\!\!\!Q_j(k-1)z_j(k-1),\nonumber\displaybreak[0]\\
&\forall i\in\mathcal{J}(k)\cup\mathcal{I}(k).\label{eq:zopti}
\end{align}
Moreover, if \eqref{eq:z=z}, \eqref{eq:Q=Q}, \eqref{eq:sumQ=sumQ}, and \eqref{eq:zopti} hold, then $V(k)\le V(k-1)$, where the equality holds if and only if $z_i(k-1)$ $\forall i\in\mathcal{I}(k)\cup\mathcal{L}(k)$ are equal.
\end{lemma}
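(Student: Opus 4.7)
The plan is to reduce both halves of the lemma to a single \emph{completing-the-square} identity. Let $S = \sum_{i \in \mathcal{I}(k) \cup \mathcal{L}(k)} Q_i(k-1)$ and $s = \sum_{i \in \mathcal{I}(k) \cup \mathcal{L}(k)} Q_i(k-1) z_i(k-1)$, so that the two constraints of problem \eqref{eq:minsumzQz} read $\sum_{i \in \mathcal{J}(k) \cup \mathcal{I}(k)} Q_i(k) = S$ and $\sum_{i \in \mathcal{J}(k) \cup \mathcal{I}(k)} Q_i(k) z_i(k) = s$. Since $\mathcal{I}(k) \ne \emptyset$ by \ref{enu:k=12}, $S$ is a sum of at least one positive definite matrix, so $S \in \mathbb{S}_+^n$ and $S^{-1}$ exists.

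For any $u \in \mathbb{R}^n$, any matrices $Q_i \in \mathbb{S}_+^n$, and any vectors $z_i \in \mathbb{R}^n$, direct expansion gives the elementary identity
\begin{equation*}
\sum_i (z_i - u)^T Q_i (z_i - u) = \sum_i z_i^T Q_i z_i - 2 u^T \sum_i Q_i z_i + u^T \Bigl(\sum_i Q_i\Bigr) u.
\end{equation*}
Applied at any feasible point of \eqref{eq:minsumzQz} with $u = S^{-1} s$, and using the two constraints, this rearranges into
\begin{equation*}
\sum_{i \in \mathcal{J}(k) \cup \mathcal{I}(k)} z_i(k)^T Q_i(k) z_i(k) = \sum_{i \in \mathcal{J}(k) \cup \mathcal{I}(k)} (z_i(k) - S^{-1} s)^T Q_i(k) (z_i(k) - S^{-1} s) + s^T S^{-1} s.
\end{equation*}
The second term on the right is a constant fixed entirely by the data at time $k-1$, while the first is a sum of nonnegative quadratic forms that vanishes precisely when $z_i(k) = S^{-1} s$ for every $i \in \mathcal{J}(k) \cup \mathcal{I}(k)$, because each $Q_i(k)$ is positive definite. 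Conversely, any choice of $Q_i(k) \in \mathbb{S}_+^n$ satisfying \eqref{eq:sumQ=sumQ} together with $z_i(k) = S^{-1} s$ automatically satisfies \eqref{eq:sumQz=sumQz}, since $\sum Q_i(k) S^{-1} s = S \cdot S^{-1} s = s$, and is therefore feasible. This gives the claimed ``if and only if'' characterization in \eqref{eq:zopti}.

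For the monotonicity half, I would substitute \eqref{eq:z=z}, \eqref{eq:Q=Q}, \eqref{eq:sumQ=sumQ}, and \eqref{eq:zopti} into the expansion \eqref{eq:V-V}. The third and fourth brackets (the ones multiplied by $z$) vanish by construction, as \eqref{eq:sumQz=sumQz} and \eqref{eq:sumQ=sumQ} are exactly what annihilates them, so $V(k) - V(k-1)$ reduces to the difference of the first two brackets. The first bracket equals $s^T S^{-1} s$ by the display above, and applying the same identity once more to the second bracket with $Q_i(k-1)$, $z_i(k-1)$, and $u = S^{-1} s$ yields
\begin{equation*}
V(k-1) - V(k) = \sum_{i \in \mathcal{I}(k) \cup \mathcal{L}(k)} (z_i(k-1) - S^{-1} s)^T Q_i(k-1) (z_i(k-1) - S^{-1} s) \ge 0,
\end{equation*}
with equality iff $z_i(k-1) = S^{-1} s$ for every $i \in \mathcal{I}(k) \cup \mathcal{L}(k)$; by positive definiteness of the $Q_i(k-1)$'s, this is equivalent to all such $z_i(k-1)$'s being equal (to their common value $S^{-1} s$, as is immediate from $s = S z^\ast$ whenever $z_i(k-1) \equiv z^\ast$). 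I do not anticipate a serious obstacle: everything rides on the one completing-the-square identity, and the only real care needed is in tracking the two distinct index sets $\mathcal{J}(k) \cup \mathcal{I}(k)$ and $\mathcal{I}(k) \cup \mathcal{L}(k)$ on the two sides of the constraints.
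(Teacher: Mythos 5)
Your proof is correct, and its second half is in substance identical to the paper's: the paper also reduces $V(k)-V(k-1)$ via \eqref{eq:V-V}, \eqref{eq:sumQz=sumQz}, and \eqref{eq:sumQ=sumQ} to $-\sum_{i\in\mathcal{I}(k)\cup\mathcal{L}(k)}(z_i(k-1)-S^{-1}s)^TQ_i(k-1)(z_i(k-1)-S^{-1}s)$, with your $S^{-1}s$ being exactly the quantity the paper denotes $z_{\mathcal{I}(k)\cup\mathcal{L}(k)}^{k-1}$. The only genuine difference is in the first half. The paper fixes an arbitrary feasible choice of $(Q_i(k))_{i\in\mathcal{J}(k)\cup\mathcal{I}(k)}$ satisfying \eqref{eq:sumQ=sumQ}, forms the Lagrangian of the resulting equality-constrained convex problem in the $z_i(k)$'s, and reads off \eqref{eq:zopti} from the stationarity conditions, then observes that the optimal value $s^TS^{-1}s$ is independent of that choice. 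You instead complete the square directly, obtaining the global lower bound $s^TS^{-1}s$ on the objective at every feasible point, attained precisely when all $z_i(k)$ equal $S^{-1}s$. Your route is slightly more elementary and self-contained: it does not need convexity to certify that a stationary point of the Lagrangian is a global minimizer, and it delivers the ``only if'' direction (uniqueness of the optimal $z_i(k)$'s among feasible points) in the same stroke, whereas the Lagrangian argument delivers it via uniqueness of the stationary point. Both arguments hinge on the same quadratic identity and the positive definiteness of $S=\sum_{i\in\mathcal{I}(k)\cup\mathcal{L}(k)}Q_i(k-1)$, which you correctly justify from $\mathcal{I}(k)\ne\emptyset$; your feasibility check that \eqref{eq:sumQ=sumQ} and \eqref{eq:zopti} imply \eqref{eq:sumQz=sumQz} closes the equivalence properly. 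No gaps.
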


Lemma~\ref{lem:optnonincr} says that the optimal solution to \eqref{eq:minsumzQz} is an {\em equalizing} action, whereby the $z_i(k)$'s of the members in $\mathcal{J}(k)\cup\mathcal{I}(k)$ are set equal to the {\em same} value given by \eqref{eq:zopti}. Indeed, this equalizing action \eqref{eq:zopti}, along with \eqref{eq:sumQ=sumQ}, enables the agents in $\mathcal{J}(k)\cup\mathcal{I}(k)\cup\mathcal{L}(k)$ to jointly make the value of $V$ decrease, unless the $z_i(k-1)$'s of those in $\mathcal{I}(k)\cup\mathcal{L}(k)$ are identical, in which case the value of $V$ is unchanged.

Although $z_i(k)$ $\forall i\in\mathcal{J}(k)\cup\mathcal{I}(k)$ are uniquely determined by \eqref{eq:zopti}, there are infinitely many ways for $Q_i(k)$ $\forall i\in\mathcal{J}(k)\cup\mathcal{I}(k)$ to satisfy \eqref{eq:sumQ=sumQ}. For simplicity, we adopt the following way to determine $Q_i(k)$ $\forall i\in\mathcal{J}(k)\cup\mathcal{I}(k)$ so that \eqref{eq:sumQ=sumQ} holds: when there are no membership changes, i.e., $\mathcal{J}(k)=\mathcal{L}(k)=\emptyset$, the members in $\mathcal{J}(k)\cup\mathcal{I}(k)$ do not update their $Q_i(k)$'s, i.e., $Q_i(k)=Q_i(k-1)$ $\forall i\in\mathcal{J}(k)\cup\mathcal{I}(k)$, whereas when there are membership changes, i.e., $\mathcal{J}(k)\cup\mathcal{L}(k)\ne\emptyset$, their $Q_i(k)$'s are set equal to the {\em same} value while satisfying \eqref{eq:sumQ=sumQ}, i.e., $Q_i(k)=\frac{1}{|\mathcal{J}(k)\cup\mathcal{I}(k)|}\sum_{j\in\mathcal{I}(k)\cup\mathcal{L}(k)}Q_j(k-1)$ $\forall i\in\mathcal{J}(k)\cup\mathcal{I}(k)$.

Having specified the evolution of the state variables $z_i(k)$'s and $Q_i(k)$'s, we next define the initial states $z_i(0)$ and $Q_i(0)$ $\forall i\in\mathcal{M}(0)$. Notice from \eqref{eq:z=z}, \eqref{eq:Q=Q}, \eqref{eq:sumQz=sumQz}, and \eqref{eq:sumQ=sumQ} that
\begin{align}
\sum_{i\in\mathcal{M}(k)}Q_i(k)z_i(k)&=\sum_{i\in\mathcal{M}(0)}Q_i(0)z_i(0),\quad\forall k\in\mathbb{N},\label{eq:sumQz=sumQz0}\displaybreak[0]\\
\sum_{i\in\mathcal{M}(k)}Q_i(k)&=\sum_{i\in\mathcal{M}(0)}Q_i(0),\quad\forall k\in\mathbb{N}.\label{eq:sumQ=sumQ0}
\end{align}
Also note that problem \eqref{eq:sumPz=sumq} is solved only if $z_i(k)$ $\forall i\in\mathcal{M}(k)$ asymptotically reach a consensus. Hence, the consensus is the solution $z$ of problem~\eqref{eq:sumPz=sumq} if $\sum_{i\in\mathcal{M}(0)}Q_i(0)z_i(0)=\sum_{i\in\mathcal{M}(0)}q_i$ and $\sum_{i\in\mathcal{M}(0)}Q_i(0)=\sum_{i\in\mathcal{M}(0)}P_i$. To satisfy these two equations, it suffices to let $z_i(0)=P_i^{-1}q_i$ and $Q_i(0)=P_i$, which can be locally realized by each initial member $i\in\mathcal{M}(0)$.

The above expressions define a distributed asynchronous iterative algorithm. Since at each time $k\in\mathbb{P}$, this algorithm involves an {\em equalizing} action taken by a {\em subset} $\mathcal{J}(k)\cup\mathcal{I}(k)\cup\mathcal{L}(k)$ of the agents, we refer to the algorithm as {\em Subset Equalizing} (SE). A complete description of SE is as follows:

\begin{algorithm}[Subset Equalizing]\label{alg:SE}
\begin{algorithminit}{At time $k=0$:}
\item Each agent $i\in\{1,2,\ldots,M\}$ creates variables $z_i\in\mathbb{R}^n\cup\{\#\}$ and $Q_i\in\mathbb{S}_+^n\cup\{\#\}$ and initializes them as
\begin{align}
z_i(0)&=\begin{cases}P_i^{-1}q_i, & \text{if $i\in\mathcal{M}(0)$},\\ \#, & \text{otherwise},\end{cases}\label{eq:zinitial}\displaybreak[0]\\
Q_i(0)&=\begin{cases}P_i, & \text{if $i\in\mathcal{M}(0)$},\\ \#, & \text{otherwise}.\end{cases}\label{eq:Qinitial}
\end{align}
\end{algorithminit}
\begin{algorithmoper}{At each time $k\in\mathbb{P}$:}
\item Agents in $\mathcal{J}(k)$ join the network.
\item Each agent $i\in\{1,2,\ldots,M\}$ updates $z_i(k)$ according to
\begin{align}
z_i(k)\!=\!\begin{cases}\displaystyle\Bigl(\sum_{j\in\mathcal{I}(k)\cup\mathcal{L}(k)\!\!\!\!\!\!\!\!\!\!\!\!\!\!\!\!\!\!\!\!}Q_j(k-1)\Bigr)^{-1}\times\cdots & \\ \displaystyle\sum_{j\in\mathcal{I}(k)\cup\mathcal{L}(k)\!\!\!\!\!\!\!\!\!\!\!\!\!\!\!\!\!\!\!\!}Q_j(k\!-\!1)z_j(k\!-\!1), &\!\!\text{if $i\!\in\!\mathcal{J}(k)\!\cup\!\mathcal{I}(k)$},\\ \#, &\!\!\text{if $i\in\mathcal{L}(k)$},\\ z_i(k-1), &\!\!\text{otherwise}.\end{cases}\label{eq:z}
\end{align}
\item If $\mathcal{J}(k)=\mathcal{L}(k)=\emptyset$, then each agent $i\in\{1,2,\ldots,M\}$ updates $Q_i(k)$ according to
\begin{align}
Q_i(k)=Q_i(k-1),\quad\forall i\in\{1,2,\ldots,M\}.\label{eq:Qunchange}
\end{align}
Otherwise, each agent $i\in\{1,2,\ldots,M\}$ updates $Q_i(k)$ according to
\begin{align}
Q_i(k)\!=\!\begin{cases}\!\frac{1}{|\mathcal{J}(k)\cup\mathcal{I}(k)|}\displaystyle\!\sum_{j\in\mathcal{I}(k)\cup\mathcal{L}(k)\!\!\!\!\!\!\!\!\!\!\!\!\!\!\!\!\!\!\!\!}Q_j(k\!-\!1), &\!\!\text{if $i\!\in\!\mathcal{J}(k)\!\cup\!\mathcal{I}(k)$},\\ \#, &\!\!\text{if $i\in\mathcal{L}(k)$},\\ Q_i(k-1), &\!\!\text{otherwise}.\end{cases}\label{eq:Qchange}
\end{align}
\item Agents in $\mathcal{L}(k)$ leave the network.
\end{algorithmoper}
\end{algorithm}

Having presented SE, we next describe two ways that SE can be tailored to multi-hop networks with fixed vertex sets. Consider a multi-hop network modeled as an undirected, connected graph $\mathcal{G}=(\mathcal{V},\mathcal{E})$, where $\mathcal{V}=\{1,2,\ldots,N\}$ is the set of nodes, and $\mathcal{E}\subset\{\{i,j\}:i,j\in\mathcal{V},i\ne j\}$ is the set of edges. Suppose each node $i\in\mathcal{V}$ observes $P_i\in\mathbb{S}_+^n$ and $q_i\in\mathbb{R}^n$, and all of them wish to solve \eqref{eq:sumPz=sumqintro} for $z\in\mathbb{R}^n$. Also suppose the nodes wish to do so in one of the following two ways: (i) by having every node $i\in\mathcal{V}$ {\em gossip} with a neighbor $j\in\mathcal{N}_i=\{j\in\mathcal{V}:\{i,j\}\in\mathcal{E}\}$ from time to time, or (ii) by having every node $i\in\mathcal{V}$ {\em interact} with all its neighbors in $\mathcal{N}_i$ {\em as a group} from time to time. Note that this setup is a special case of the agent network~\ref{enu:k=0}--\ref{enu:k=12rest}, obtained by letting $M=N$, $\mathcal{F}=\mathcal{V}$, $\mathcal{J}(k)\equiv\emptyset$, $\mathcal{L}(k)\equiv\emptyset$, $\mathcal{I}(k)\in\mathcal{E}$ for~(i), and $\mathcal{I}(k)\in\{\{i\}\cup\mathcal{N}_i:i\in\mathcal{V}\}$ for~(ii). Thus, the nodes can solve \eqref{eq:sumPz=sumqintro} by using SE, which in this special case may be referred to as {\em Pairwise Equalizing} (PE) for~(i), and as {\em Groupwise Equalizing} (GE) for~(ii). A complete description of PE and GE that includes their communication and computation aspects is as follows (see \cite{LuJ09b} for more details):

\begin{algorithm}[Pairwise Equalizing]\label{alg:PE}
\begin{algorithminit}{}
\item Each node $i\in\mathcal{V}$ transmits $P_i$ to every node $j\in\mathcal{N}_i$.
\item Each node $i\in\mathcal{V}$ creates a variable $z_i\in\mathbb{R}^n$ and initializes it: $z_i\leftarrow P_i^{-1}q_i.$
\end{algorithminit}
\begin{algorithmoper}{At each iteration:}
\item A node, say, node $i\in\mathcal{V}$, initiates the iteration and selects a neighbor, say, node $j\in\mathcal{N}_i$, to gossip.
\item Node $i$ transmits $z_i$ to node $j$.
\item Node $j$ updates $z_j$: $z_j\leftarrow(P_i+P_j)^{-1}(P_iz_i+P_jz_j).$
\item Node $j$ transmits $z_j$ to node $i$.
\item Node $i$ updates $z_i$: $z_i\leftarrow z_j.$
\end{algorithmoper}
\end{algorithm}

\begin{algorithm}[Groupwise Equalizing]\label{alg:GE}
\begin{algorithminit}{}
\item Each node $i\in\mathcal{V}$ transmits $P_i$ to every node $j\in\mathcal{N}_i$.
\item Each node $i\in\mathcal{V}$ creates a variable $z_i\in\mathbb{R}^n$ and initializes it: $z_i\leftarrow P_i^{-1}q_i.$
\end{algorithminit}
\begin{algorithmoper}{At each iteration:}
\item A node, say, node $i\in\mathcal{V}$, initiates the iteration and transmits a message to every node $j\in\mathcal{N}_i$, requesting their $z_j$'s.
\item Each node $j\in\mathcal{N}_i$ transmits $z_j$ to node $i$.
\item Node $i$ updates $z_i$: $\displaystyle z_i\leftarrow\Bigl(\sum_{j\in\{i\}\cup\mathcal{N}_i\!\!\!\!\!}P_j\Bigr)^{-1}\!\!\sum_{j\in\{i\}\cup\mathcal{N}_i\!\!\!\!\!}P_jz_j.$
\item Node $i$ transmits $z_i$ to every node $j\in\mathcal{N}_i$.
\item Each node $j\in\mathcal{N}_i$ updates $z_j$: $z_j\leftarrow z_i.$
\end{algorithmoper}
\end{algorithm}

Observe that although PE is simple, it may have slow convergence because at each iteration, only {\em two} of the $N$ $z_i$'s are equalized. Conceivably, allowing {\em more} $z_i$'s to be equalized at once may speed up convergence, and this is exactly what GE does. Also notice that when $n=1$ and $P_i=1$ $\forall i\in\mathcal{V}$ for which \eqref{eq:sumPz=sumqintro} becomes the average consensus problem, PE reduces to {\em Pairwise Averaging} \cite{Tsitsiklis84} and {\em Randomized Gossip Algorithm} \cite{Boyd06}, while GE reduces to {\em Distributed Random Grouping} \cite{ChenJY06}.

\begin{remark}\label{rem:difference}
Note that existing distributed convex optimization algorithms (e.g., \cite{Rabbat05, Nedic09, ZhuM12, LuJ12b, Towfic14, Kia15}) may not be able to solve problem \eqref{eq:sumPz=sumq} over the agent network~\ref{enu:k=0}--\ref{enu:k=12rest} even though \eqref{eq:sumPz=sumq} can be viewed as an unconstrained convex quadratic program. The reason is that unlike SE, these algorithms may not be able to handle the switching of an agent's state variables from real-valued to $\#$ as the agent leaves the network and loses its memory, and from $\#$ to real-valued as the agent joins the network and interacts with others. Other main differences between SE and these algorithms include: (i) SE does not require the use of a stepsize whereas many of these algorithms do; and (ii) as mentioned above SE reduces to known algorithms in very special cases whereas most of these algorithms do not.
\end{remark}

\section{Network Connectivity}\label{sec:netwconn}

With SE, every time a subset of the $M$ agents interact and update their $z_i(k)$'s and $Q_i(k)$'s, $V(k)$ is non-increasing. While this ensures that $V(k)$ must converge, it does not ensure convergence to zero, which is desired. In fact, it is not difficult to see that for $V(k)$ to go to zero, the agent network~\ref{enu:k=0}--\ref{enu:k=12rest} must be connected in some sense. In this section, we develop a few notions of connectivity, which---unlike those in basic graph theory---are applicable to such a network.

To begin, consider a {\em hypothetical scenario}, in which $k\in\mathbb{N}$ denotes the initial time and $\ell\ge k$ the actual time. At the initial time $\ell=k$, each member $i\in\mathcal{M}(\ell)$ creates a message called message $i$, while each non-member $i\in\{1,2,\ldots,M\}-\mathcal{M}(\ell)$ has an empty memory. At each subsequent time $\ell\ge k+1$, besides action~\ref{enu:join}, each joining member $i\in\mathcal{J}(\ell)$ creates a message called message $i$ (if it has never been created) or recreates message $i$ (if it has been destroyed). Upon joining, through action~\ref{enu:inte}, all interacting members in $\mathcal{J}(\ell)\cup\mathcal{I}(\ell)\cup\mathcal{L}(\ell)$ share with one another the messages they have gathered so far. Upon interacting, besides action~\ref{enu:leav}, each leaving member $i\in\mathcal{L}(\ell)$ empties its memory and asks all staying members in $\mathcal{M}(\ell)$ to erase message $i$ from their memories, destroying message $i$. This process is then repeated indefinitely for every $\ell\ge k+1$.

For the hypothetical scenario stated above, an intriguing question is: with messages being created, shared, and destroyed as agents join the network, interact, and leave, what would be an appropriate definition of connectivity? To answer this question, recall that an undirected graph $\mathcal{G}=(\mathcal{V},\mathcal{E})$ is connected if every pair of nodes in $\mathcal{V}$ is connected by a path of edges in $\mathcal{E}$. In other words, it is not possible to partition $\mathcal{V}$ into two nonempty subsets $\mathcal{V}_1$ and $\mathcal{V}_2$ and have no paths connecting the nodes in $\mathcal{V}_1$ with those in $\mathcal{V}_2$. Motivated by this, we say that the agent network is {\em disconnected under $\mathcal{A}$ at time $k\in\mathbb{N}$} if $\mathcal{A}$ in \eqref{eq:A=FJILJIL} is such that for every $\ell\ge k$, $\mathcal{M}(\ell)$ can be partitioned into two nonempty subsets $\mathcal{M}_1(\ell)$ and $\mathcal{M}_2(\ell)$, such that all the members in $\mathcal{M}_1(\ell)$ are unaware of any messages created by those in $\mathcal{M}_2(\ell)$, and vice versa. In this definition, the phrase ``under $\mathcal{A}$ at time $k\in\mathbb{N}$'' is needed because the statement may be true for some $\mathcal{A}$ and $k$, and false for others. Likewise, the quantifier ``for every $\ell\ge k$'' is added so that being disconnected means there are {\em always} two groups of messages, which are separable.

Although it is mathematically precise, the above definition may not be readily useful in analysis because checking whether $\mathcal{M}(\ell)$ can be so partitioned for infinitely many $\ell$'s may be prohibitive. Also, if the network is {\em not} disconnected (i.e., is connected), the definition says nothing about how well-connected it is. To overcome these two limitations, let us associate with each initial time $k\in\mathbb{N}$, each subsequent time $\ell\ge k$, and each agent $i\in\{1,2,\ldots,M\}$ a set $C_i(k,\ell)\subset\mathcal{M}(\ell)$ which, roughly speaking, keeps track of the subset of members that cannot be partitioned without message crossovers. More precisely, for each $k\in\mathbb{N}$, let $C_i(k,\ell)$ $\forall i\in\{1,2,\ldots,M\}$ be initialized at $\ell=k$ to
\begin{align}
C_i(k,k)=\begin{cases}\{i\}, & \text{if $i\in\mathcal{M}(k)$},\\ \emptyset, & \text{otherwise},\end{cases}\label{eq:Cinitial}
\end{align}
and defined recursively for each $\ell\ge k+1$ as
\begin{align}
C_i(k,\ell)\!=\!\begin{cases}\displaystyle\Bigl(\bigcup_{j\in\mathcal{I}(\ell)\cup\mathcal{L}(\ell)\!\!\!\!\!\!\!\!\!\!\!\!\!\!\!\!\!\!\!\!}C_j(k,\ell-1)\cup\mathcal{J}(\ell)\Bigr)-\mathcal{L}(\ell),\\ \quad\quad\quad\text{if $\displaystyle i\!\in\!\Bigl(\bigcup_{j\in\mathcal{I}(\ell)\cup\mathcal{L}(\ell)\!\!\!\!\!\!\!\!\!\!\!\!\!\!\!\!\!\!\!\!}C_j(k,\ell\!-\!1)\!\cup\!\mathcal{J}(\ell)\Bigr)\!-\!\mathcal{L}(\ell)$},\\ \emptyset,\hfill\text{if $i\in\mathcal{L}(\ell)$},\\ C_i(k,\ell-1),\hfill\text{otherwise}.\end{cases}\label{eq:C}
\end{align}
Then, by induction on $\ell$ using \eqref{eq:Cinitial} and \eqref{eq:C}, we see that: (i) $\forall k\in\mathbb{N}$, $\forall\ell\ge k$, and $\forall i\in\{1,2,\ldots,M\}$, if $i\in\mathcal{M}(\ell)$ then $i\in C_i(k,\ell)\subset\mathcal{M}(\ell)$, otherwise $C_i(k,\ell)=\emptyset$; (ii) $\forall k\in\mathbb{N}$, $\forall\ell\ge k$, and $\forall i,j\in\mathcal{M}(\ell)$, either $C_i(k,\ell)\cap C_j(k,\ell)=\emptyset$ or $C_i(k,\ell)=C_j(k,\ell)$; and (iii) $\forall k\in\mathbb{N}$, $\forall\ell\ge k$, and $\forall i\in\mathcal{M}(\ell)$, $C_i(k,\ell)$ is the largest subset of $\mathcal{M}(\ell)$ containing agent $i$ that cannot be partitioned into two nonempty subsets, such that all the members in one are unaware of any messages from those in the other. It follows from~(i)--(iii) that the agent network is connected under $\mathcal{A}$ at time $k\in\mathbb{N}$ if and only if $\mathcal{A}$ in \eqref{eq:A=FJILJIL} is such that there exists $\ell'\ge k$ with $\ell'<\infty$, such that $C_i(k,\ell')=\mathcal{M}(\ell')$ $\forall i\in\mathcal{M}(\ell')$ (note that if such an $\ell'$ exists, then $C_i(k,\ell)=\mathcal{M}(\ell)$ $\forall i\in\mathcal{M}(\ell)$ $\forall\ell>\ell'$). This necessary and sufficient condition is more useful in analysis than the original definition (i.e., checking whether $\mathcal{M}(\ell)$ can be partitioned) because it leverages \eqref{eq:Cinitial} and \eqref{eq:C} and eliminates the need to record what messages are known to which agents at what times, which is rather cumbersome. Additionally, if the network is connected at time $k$, the smallest such $\ell'$, denoted as $\ell^*$, is a measure of how well-connected it is because $\ell^*-k$ represents the number of time instants required for the messages to become inseparable. Thus, this condition bypasses the two aforementioned limitations.

Observe that for a given $\mathcal{A}$, the network may be disconnected at certain times, and connected at others, during which it may require different number of time instants (i.e., $\ell^*-k$) for the messages to become inseparable (note that $\ell^*$ depends on $k$). To reflect these different levels of connectedness, let us introduce a function $h:\mathbb{N}\rightarrow\mathbb{N}\cup\{\infty\}$ and a constant $h^*\in\mathbb{N}\cup\{\infty\}$, defined as
\begin{align}
h(k)&=\inf D_k-k,\quad\forall k\in\mathbb{N},\label{eq:hk}\displaybreak[0]\\
h^*&=\sup_{k\in\mathbb{N}}h(k),\label{eq:hstar}
\end{align}
where the set $D_k\subset\{k,k+1,\ldots\}$ is given by
\begin{align}
D_k=\{\ell\ge k:C_i(k,\ell)\!=\!\mathcal{M}(\ell)\;\forall i\!\in\!\mathcal{M}(\ell)\},\quad\forall k\!\in\!\mathbb{N}.\label{eq:Dk}
\end{align}
With \eqref{eq:hk}--\eqref{eq:Dk}, we have $h(k)=\ell^*-k$ if the network is connected at time $k$ (due to definition of $\ell^*$), $h(k)=\infty$ otherwise (due to $D_k=\emptyset$ and $\inf\emptyset=\infty$), and $h^*<\infty$ if and only if $h$ is bounded. Hence, the smaller $h(k)$ and $h^*$, the better the ``instantaneous'' and ``worst-case'' connectedness, respectively. Putting all of the above together, we arrive at the following formal definition:

\begin{definition}\label{def:netwconn}
The agent network modeled by~\ref{enu:k=0}--\ref{enu:k=12rest} is said to be {\em connected under $\mathcal{A}$ at time $k\in\mathbb{N}$} if $h(k)<\infty$. It is said to be {\em connected under $\mathcal{A}$} if $h(k)<\infty$ $\forall k\in\mathbb{N}$, and {\em uniformly connected under $\mathcal{A}$} if $h^*<\infty$.
\end{definition}

To illustrate the above ideas, consider Figure~\ref{fig:netwconn}, which shows a $6$-agent network at some time $k$ and its evolution until time $k+4$. In this figure, an agent $i$ is a member at time $\ell$ if and only if it is enclosed by a black dashed curve (e.g., agent $6$ is not a member at time $k$). Moreover, if an agent $i$ at time $\ell$ is enclosed by a gray solid curve, then $C_i(k,\ell)$ is the set of agents enclosed by the same curve (e.g., $C_1(k,k+1)=\{1\}$, $C_2(k,k+1)=\{2\}$, $C_3(k,k+1)=C_5(k,k+1)=\{3,5\}$, and $C_4(k,k+1)=\{4\}$). Otherwise, $C_i(k,\ell)$ is empty (e.g., $C_6(k,k+1)=\emptyset$). Note that the black dashed curve at time $k$ is arbitrarily selected, whereas those at subsequent times are due to \eqref{eq:M=McupJ-L}. Similarly, the gray solid curves at time $k$ are due to \eqref{eq:Cinitial}, whereas those at subsequent times are due to \eqref{eq:C}. Examining these curves along with \eqref{eq:Dk}, we deduce that the set $D_k$ does not contain $k$, $k+1$, $k+2$, and $k+3$ but contains $k+4$. From \eqref{eq:hk} and Definition~\ref{def:netwconn}, we conclude that $h(k)=4$ and, hence, the network is connected under $\mathcal{A}$ at time $k$.

\begin{figure*}[!t]
\centering\begin{texdraw}\input{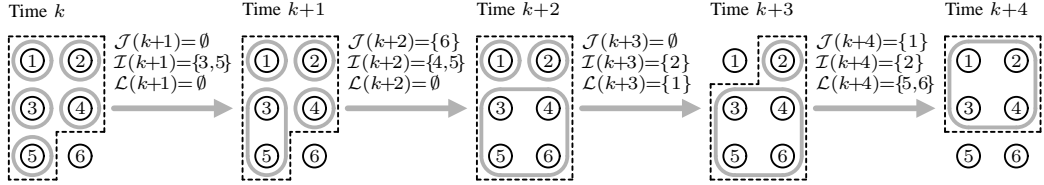}\end{texdraw}
\caption{An illustration of network connectivity in a $6$-agent network. Each black dashed curve represents $\mathcal{M}(\ell)$ and each gray solid curve represents $C_i(k,\ell)$.}
\label{fig:netwconn}
\end{figure*}

The following examples further illustrate Definition~\ref{def:netwconn}:

\begin{example}\label{exa:disconn}
Consider the agent network~\ref{enu:k=0}--\ref{enu:k=12rest} and suppose $M=3$. Let $\mathcal{F}=\{1,2\}$ and $(\mathcal{J}(k),\mathcal{I}(k),\mathcal{L}(k))$ be equal to $(\{3\},\{1\},\emptyset)$ if $(k-1)/6\in\mathbb{N}$, to $(\emptyset,\{3\},\{1\})$ if $(k-2)/6\in\mathbb{N}$, to $(\{1\},\{2\},\emptyset)$ if $(k-3)/6\in\mathbb{N}$, to $(\emptyset,\{1\},\{2\})$ if $(k-4)/6\in\mathbb{N}$, to $(\{2\},\{3\},\emptyset)$ if $(k-5)/6\in\mathbb{N}$, and to $(\emptyset,\{2\},\{3\})$ if $(k-6)/6\in\mathbb{N}$, thereby defining $\mathcal{A}$ in \eqref{eq:A=FJILJIL}. Examining $\mathcal{A}$, we see two groups of messages being passed around the agents, but {\em never} getting a chance to ``mix.'' Thus, we expect the network to be disconnected under $\mathcal{A}$ at all times. Indeed, applying \eqref{eq:Cinitial}, \eqref{eq:C}, \eqref{eq:hk}, \eqref{eq:Dk}, and Definition~\ref{def:netwconn} yields $h(k)=\infty$ $\forall k\in\mathbb{N}$, confirming the expectation.\hfill$\square$
\end{example}

\begin{example}\label{exa:unifconn}
Reconsider the agent network in Example~\ref{exa:disconn} but let $(\mathcal{J}(k),\mathcal{I}(k),\mathcal{L}(k))$ be equal to $(\{3\},\{1\},\emptyset)$ if $(k-1)/6\in\mathbb{N}$, to $(\emptyset,\{2\},\{1\})$ if $(k-2)/6\in\mathbb{N}$, to $(\{1\},\{2\},\emptyset)$ if $(k-3)/6\in\mathbb{N}$, to $(\emptyset,\{3\},\{2\})$ if $(k-4)/6\in\mathbb{N}$, to $(\{2\},\{3\},\emptyset)$ if $(k-5)/6\in\mathbb{N}$, and to $(\emptyset,\{1\},\{3\})$ if $(k-6)/6\in\mathbb{N}$. Observe that unlike the $\mathcal{A}$ in Example~\ref{exa:disconn}, the $\mathcal{A}$ here causes the messages to quickly become inseparable no matter the initial time. Hence, the network is expected to not only be connected, but uniformly so, under $\mathcal{A}$. It follows from \eqref{eq:hk} that $h(k)=2$ if $k$ is even and $h(k)=3$ if $k$ is odd, from \eqref{eq:hstar} that $h^*=3$, and from Definition~\ref{def:netwconn} that the network is indeed uniformly connected.\hfill$\square$
\end{example}

\begin{example}\label{exa:notunifconn}
Reconsider the agent network in Example~\ref{exa:disconn} but let $\mathcal{F}=\{1,2,3\}$ and $(\mathcal{J}(k),\mathcal{I}(k),\mathcal{L}(k))$ be equal to $(\emptyset,\{1,2\},\emptyset)$ if $k\in\{\ell(\ell+1)/2:\ell\in\mathbb{P}\}$ and to $(\emptyset,\{2,3\},\emptyset)$ otherwise. Notice that although agent $2$ takes turn to interact with agents $1$ and $3$, its interaction with agent $1$ becomes less and less frequent, as if the network is gradually losing its connectivity. Therefore, the network is expected to be connected, but not uniformly so, under $\mathcal{A}$. Indeed, it is connected because $h(0)=2$ and $h(k)\le\ell+1<\infty$ $\forall\ell\in\mathbb{P}$ $\forall k\in[\ell(\ell+1)/2,(\ell+1)(\ell+2)/2-1]$. It is not uniformly connected because $h(\ell(\ell+1)/2)=\ell+1$ $\forall\ell\in\mathbb{P}$ so that $h^*=\infty$.\hfill$\square$
\end{example}

Finally, it might be of interest to see how Definition~\ref{def:netwconn} is related to existing definitions of connectivity in the literature. The following proposition sheds light on this question, showing that when there are no membership changes, the connectivity of the agent network under $\mathcal{A}$ is equivalent to the connectivity of an infinite interaction graph first introduced in \cite{Tsitsiklis84}, so that the former is a generalization of the latter:

\begin{proposition}\label{pro:netwconn}
If $\mathcal{J}(k)=\mathcal{L}(k)=\emptyset$ $\forall k\in\mathbb{P}$, then the agent network~\ref{enu:k=0}--\ref{enu:k=12rest} is connected under $\mathcal{A}$ if and only if the graph $(\mathcal{F},\mathcal{E}_\infty)$ is connected, where
\begin{align}
\mathcal{E}_\infty=\{\{i,j\}\subset\mathcal{F}:\{i,j\}\subset\mathcal{I}(k)\;\text{for infinitely many}\;k\in\mathbb{P}\}.\label{eq:Einfty}
\end{align}
\end{proposition}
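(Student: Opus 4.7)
The plan is to first reduce \eqref{eq:C} to a standard union-find / connected-components calculation. Under the hypothesis $\mathcal{J}(k)=\mathcal{L}(k)=\emptyset$ $\forall k\in\mathbb{P}$, we have $\mathcal{M}(\ell)=\mathcal{F}$ for all $\ell$, and \eqref{eq:C} collapses to: $C_i(k,\ell)=\cup_{j\in\mathcal{I}(\ell)}C_j(k,\ell-1)$ when $i$ lies in that union, and $C_i(k,\ell)=C_i(k,\ell-1)$ otherwise. By induction on $\ell\ge k$, I would show that $C_i(k,\ell)$ is precisely the connected component of vertex $i$ in the graph $(\mathcal{F},\mathcal{E}_k^\ell)$, where $\mathcal{E}_k^\ell:=\{\{u,v\}\subset\mathcal{F}:u\ne v,\;\{u,v\}\subset\mathcal{I}(k')\text{ for some }k'\in\{k+1,\ldots,\ell\}\}$. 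The base case is immediate from \eqref{eq:Cinitial}; the inductive step observes that the interaction $\mathcal{I}(\ell)$ adds a clique to the current edge set, which merges exactly those old components meeting $\mathcal{I}(\ell)$, matching the two cases of \eqref{eq:C}. Consequently $C_i(k,\ell)=\mathcal{M}(\ell)$ for every $i\in\mathcal{F}$ if and only if $(\mathcal{F},\mathcal{E}_k^\ell)$ is connected.

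For the $(\Leftarrow)$ direction, suppose $(\mathcal{F},\mathcal{E}_\infty)$ is connected. Fix $k\in\mathbb{N}$ and choose a spanning tree of $(\mathcal{F},\mathcal{E}_\infty)$ with edges $e_1,\ldots,e_{|\mathcal{F}|-1}$. Since each $e_s\in\mathcal{E}_\infty$ occurs in $\mathcal{I}(k')$ for infinitely many $k'$, I can successively pick times $k<\ell_1<\ell_2<\cdots<\ell_{|\mathcal{F}|-1}$ with $e_s\subset\mathcal{I}(\ell_s)$. Then $\mathcal{E}_k^{\ell_{|\mathcal{F}|-1}}$ contains the spanning tree, so $(\mathcal{F},\mathcal{E}_k^{\ell_{|\mathcal{F}|-1}})$ is connected and hence $C_i(k,\ell_{|\mathcal{F}|-1})=\mathcal{F}$ $\forall i\in\mathcal{F}$, yielding $h(k)\le\ell_{|\mathcal{F}|-1}-k<\infty$. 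Since $k$ was arbitrary, Definition~\ref{def:netwconn} gives connectivity under $\mathcal{A}$.

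For the $(\Rightarrow)$ direction, I would argue the contrapositive: suppose $(\mathcal{F},\mathcal{E}_\infty)$ is disconnected, so $\mathcal{F}$ splits as a disjoint union $\mathcal{V}_1\cup\mathcal{V}_2$ with both parts nonempty and no edge of $\mathcal{E}_\infty$ crossing the cut. Each of the finitely many cut edges $\{u,v\}$ with $u\in\mathcal{V}_1$, $v\in\mathcal{V}_2$ lies outside $\mathcal{E}_\infty$, hence appears in only finitely many $\mathcal{I}(k')$'s. Choosing $k^*\in\mathbb{N}$ past the last such occurrence, every subsequent $\mathcal{I}(k')$ is confined entirely to $\mathcal{V}_1$ or entirely to $\mathcal{V}_2$, so $\mathcal{E}_{k^*}^\ell$ contains no cut edge for any $\ell\ge k^*$. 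The reduction from Step~1 then forces $C_i(k^*,\ell)\subset\mathcal{V}_1\subsetneq\mathcal{F}$ for every $i\in\mathcal{V}_1$ and every $\ell\ge k^*$, whence $D_{k^*}=\emptyset$, $h(k^*)=\infty$, and the network is not connected under $\mathcal{A}$.

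The main obstacle is Step~1: carefully verifying that the asymmetric-looking recursion \eqref{eq:C} really does produce the connected components of the accumulated interaction graph, in particular that an agent not in $\mathcal{I}(\ell)$ but already grouped with some $j\in\mathcal{I}(\ell)$ automatically has its $C$-set updated consistently. Once this graph-theoretic identification is in place, both implications reduce to elementary spanning-tree and cut arguments.
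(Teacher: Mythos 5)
Your proof is correct, and its forward direction takes a genuinely different route from the paper's. The Step~1 identification---that under $\mathcal{J}(k)=\mathcal{L}(k)=\emptyset$ the recursion \eqref{eq:C} computes exactly the connected components of the accumulated interaction graph $(\mathcal{F},\mathcal{E}_k^\ell)$---does go through: each interaction $\mathcal{I}(\ell)$ adds a clique, which merges precisely the old components meeting $\mathcal{I}(\ell)$, and every agent in the merged set (including those not in $\mathcal{I}(\ell)$ but already grouped with some $j\in\mathcal{I}(\ell)$) lands in the first case of \eqref{eq:C}; the paper relies on this same fact but leaves it implicit, citing \eqref{eq:Cinitial} and \eqref{eq:C} without proof. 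Your $(\Leftarrow)$ argument is essentially the paper's, except that you wait only until a spanning tree of $(\mathcal{F},\mathcal{E}_\infty)$ has appeared, whereas the paper waits until every edge of $\mathcal{E}_\infty$ has appeared; both yield $h(k)<\infty$ for arbitrary $k$. The real divergence is in $(\Rightarrow)$: you argue the contrapositive via a cut---if $(\mathcal{F},\mathcal{E}_\infty)$ is disconnected, the finitely many cut edges each occur finitely often, so past some $k^*$ every $\mathcal{I}(k')$ is confined to one side of the cut and $h(k^*)=\infty$. The paper instead argues directly: each window $[k+1,k+h(k)]$ produces a connected accumulated edge set $\tilde{\mathcal{E}}(k)$, and since there are only $2^{|\mathcal{F}|(|\mathcal{F}|-1)/2}-1$ nonempty edge sets on $\mathcal{F}$, some connected one recurs for infinitely many $k$ and is therefore contained in $\mathcal{E}_\infty$. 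Your cut argument is the more elementary of the two and avoids the pigeonhole over edge sets; the paper's version avoids having to locate a time after which no crossing interaction occurs. Both are complete proofs.
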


\section{Boundedness and Convergence}\label{sec:bounconv}

In this section, we analyze the boundedness, asymptotic convergence, and exponential convergence of SE and derive a bound on its convergence rate. To streamline the presentation of the results, we defer their proofs to the Appendix. Moreover, we let $\beta>0$ denote the spectral radius of $\sum_{i\in\mathcal{M}(0)}P_i$ and introduce the following definition:

\begin{definition}\label{def:upd}
The sequence $\{Q_i(k)\}_{k\in\mathbb{N},i\in\mathcal{M}(k)}$ produced by SE is said to be {\em uniformly positive definite under $\mathcal{A}$} if $\exists\alpha>0$ such that $\forall k\in\mathbb{N}$, $\forall i\in\mathcal{M}(k)$, $Q_i(k)-\alpha I\in\mathbb{S}_+^n$.
\end{definition}

Although the initial values $Q_i(0)$'s depend on the observations $P_i$'s via \eqref{eq:Qinitial}, it can be verified that the uniform positive definiteness of $\{Q_i(k)\}_{k\in\mathbb{N},i\in\mathcal{M}(k)}$ depends only on the agent actions $\mathcal{A}$ and not on the $P_i$'s nor the $q_i$'s.

We first give a sufficient condition on SE's boundedness:

\begin{theorem}\label{thm:SEboun}
Consider the agent network~\ref{enu:k=0}--\ref{enu:k=12rest} and the use of SE. Let $\mathcal{A}$ be given. Then, $Q_i(k)$ is bounded as follows:
\begin{align}
Q_i(k)\le\beta I,\quad\forall k\in\mathbb{N},\;\forall i\in\mathcal{M}(k).\label{eq:Qbound}
\end{align}
If, in addition, the sequence $\{Q_i(k)\}_{k\in\mathbb{N},i\in\mathcal{M}(k)}$ is uniformly positive definite under $\mathcal{A}$, then $z_i(k)$ is bounded as follows:
\begin{align}
\|z_i(k)-z\|^2\le\frac{V(k)}{\alpha}\le\frac{V(0)}{\alpha},\quad\forall k\in\mathbb{N},\;\forall i\!\in\!\mathcal{M}(k),\label{eq:zbound}
\end{align}
where $\alpha$ is any positive number satisfying $Q_i(k)-\alpha I\in\mathbb{S}_+^n$ $\forall k\in\mathbb{N}$ $\forall i\in\mathcal{M}(k)$.
\end{theorem}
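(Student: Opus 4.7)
The proof splits along the two claims. For the $Q_i(k)$ bound \eqref{eq:Qbound}, my plan is to leverage the global conservation identity \eqref{eq:sumQ=sumQ0}, which follows by induction on $k$ from the local conservation \eqref{eq:sumQ=sumQ} that SE enforces at every time step. Indeed, in the ``no membership change'' branch \eqref{eq:Qunchange}, \eqref{eq:sumQ=sumQ} holds trivially, and in the ``with membership change'' branch, summing \eqref{eq:Qchange} over $i\in\mathcal{J}(k)\cup\mathcal{I}(k)$ reproduces exactly the right-hand side of \eqref{eq:sumQ=sumQ}. Once \eqref{eq:sumQ=sumQ0} is in hand, since every $Q_j(k)\in\mathbb{S}_+^n$, discarding all summands except the $i$-indexed one yields $Q_i(k)\le\sum_{j\in\mathcal{M}(0)}P_j$ in the positive semidefinite order. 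The matrix $\sum_{j\in\mathcal{M}(0)}P_j\in\mathbb{S}_+^n$ has largest eigenvalue equal to its spectral radius $\beta$, so $\sum_{j\in\mathcal{M}(0)}P_j\le\beta I$, and transitivity gives \eqref{eq:Qbound}.

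For the $z_i(k)$ bound \eqref{eq:zbound}, my plan is to combine the non-increasing nature of $V$ with a single-term lower bound on $V(k)$. SE realizes \eqref{eq:z=z}, \eqref{eq:Q=Q}, \eqref{eq:sumQ=sumQ}, and \eqref{eq:zopti} by construction (the first two explicitly in Algorithm~\ref{alg:SE}, the latter two as verified above and as enforced by \eqref{eq:z}), so Lemma~\ref{lem:optnonincr} applies at every $k\in\mathbb{P}$ to give $V(k)\le V(k-1)$; a straightforward induction then yields $V(k)\le V(0)$ for all $k\in\mathbb{N}$. Next, fixing $k\in\mathbb{N}$ and $i\in\mathcal{M}(k)$, nonnegativity of every summand in \eqref{eq:V} implies $(z_i(k)-z)^TQ_i(k)(z_i(k)-z)\le V(k)$. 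The uniform positive definiteness hypothesis gives $(z_i(k)-z)^TQ_i(k)(z_i(k)-z)\ge\alpha\|z_i(k)-z\|^2$; dividing by $\alpha>0$ and chaining with $V(k)\le V(0)$ delivers \eqref{eq:zbound}.

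I expect no substantive obstacles. The only items warranting care are (i) pedantically verifying that the SE update rules match the hypotheses of Lemma~\ref{lem:optnonincr} in both branches of the $Q$-update (the trivial branch when $\mathcal{J}(k)=\mathcal{L}(k)=\emptyset$, and the averaging branch otherwise), and (ii) respecting the domain conventions \eqref{eq:znotmem}--\eqref{eq:Qnotmem} so that every matrix or vector inequality is asserted only at indices where the corresponding state is defined, i.e., for $i\in\mathcal{M}(k)$.
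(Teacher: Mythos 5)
Your proposal is correct and follows essentially the same route as the paper's proof: the $Q_i(k)$ bound comes from the conservation identities \eqref{eq:sumQ=sumQ0} and \eqref{eq:sumQ0=sumP} together with positive definiteness of the discarded summands, and the $z_i(k)$ bound comes from the monotonicity of $V$ (Lemma~\ref{lem:optnonincr}) combined with the single-term lower bound $\alpha\|z_i(k)-z\|^2\le\|z_i(k)-z\|_{Q_i(k)}^2\le V(k)\le V(0)$. The extra care you take in verifying that SE's update rules satisfy the hypotheses of Lemma~\ref{lem:optnonincr} is sound but is left implicit in the paper.
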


Theorem~\ref{thm:SEboun} implies that all the $Q_i(k)$'s of the members are unconditionally bounded from above by $\beta$, irrespective of the agent actions $\mathcal{A}$. In addition, if they turn out to be bounded from below by some $\alpha>0$, then all the $z_i(k)$'s of the members are guaranteed to stay within a ball centered at the solution $z$, whose radius $\sqrt{V(k)/\alpha}$ decreases over time.

In general, given $\mathcal{A}$, it is not easy to check whether the resulting sequence $\{Q_i(k)\}_{k\in\mathbb{N},i\in\mathcal{M}(k)}$ is uniformly positive definite under $\mathcal{A}$. However, if $\mathcal{A}$ happens to be such that every agent joins and leaves the network arbitrarily but {\em finitely} many times---a rather mild condition that is often satisfied in practice---then the uniform positive definiteness of $\{Q_i(k)\}_{k\in\mathbb{N},i\in\mathcal{M}(k)}$ can be immediately verified. The definition and corollary to Theorem~\ref{thm:SEboun} below formalize this claim:

\begin{definition}\label{def:ultistat}
The membership dynamics \eqref{eq:M=McupJ-L} of the agent network~\ref{enu:k=0}--\ref{enu:k=12rest} are said to be {\em ultimately static under $\mathcal{A}$} if $\exists k\in\mathbb{N}$ such that $\forall\ell>k$, $\mathcal{M}(\ell)=\mathcal{M}(k)$, i.e., $\mathcal{J}(\ell)=\mathcal{L}(\ell)=\emptyset$.
\end{definition}

\begin{corollary}\label{cor:SEboun}
If the membership dynamics \eqref{eq:M=McupJ-L} are ultimately static under $\mathcal{A}$, then $Q_i(k)$ and $z_i(k)$ are bounded as in \eqref{eq:Qbound} and \eqref{eq:zbound} for some $\alpha>0$.
\end{corollary}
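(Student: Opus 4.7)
The upper bound $Q_i(k)\le\beta I$ is unconditional in Theorem~\ref{thm:SEboun}, so it carries over with no extra work. For the bound on $z_i(k)$, the plan is to invoke Theorem~\ref{thm:SEboun} again, which only requires verifying that $\{Q_i(k)\}_{k\in\mathbb{N},i\in\mathcal{M}(k)}$ is uniformly positive definite under $\mathcal{A}$ in the sense of Definition~\ref{def:upd}. Once an $\alpha>0$ with $Q_i(k)-\alpha I\in\mathbb{S}_+^n$ is produced, \eqref{eq:zbound} follows directly.

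To exhibit such an $\alpha$, I would use ultimate staticness to split time into a finite transient and an infinite frozen tail. By Definition~\ref{def:ultistat}, there exists $k^\star\in\mathbb{N}$ such that $\mathcal{J}(\ell)=\mathcal{L}(\ell)=\emptyset$ for all $\ell>k^\star$. Feeding this into the $Q$-update \eqref{eq:Qunchange} of Algorithm~\ref{alg:SE} gives $Q_i(\ell)=Q_i(\ell-1)$ for every $\ell>k^\star$ and every $i\in\{1,2,\ldots,M\}$. Iterating this identity together with \eqref{eq:M=McupJ-L} yields $\mathcal{M}(k)=\mathcal{M}(k^\star)$ and $Q_i(k)=Q_i(k^\star)$ for all $k\ge k^\star$ and all $i\in\mathcal{M}(k)$, so on the tail the collection of matrices that appears is exactly $\{Q_i(k^\star):i\in\mathcal{M}(k^\star)\}$, which is finite.

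On the transient, the index set $\{(k,i):0\le k\le k^\star,\;i\in\mathcal{M}(k)\}$ is also finite, and for each such $(k,i)$ we have $Q_i(k)\in\mathbb{S}_+^n$ by construction of SE (the initializer \eqref{eq:Qinitial} gives $Q_i(0)=P_i\in\mathbb{S}_+^n$, and the updates \eqref{eq:Qunchange}--\eqref{eq:Qchange} preserve positive definiteness since convex combinations and sums of matrices in $\mathbb{S}_+^n$ stay in $\mathbb{S}_+^n$). Hence the smallest eigenvalue $\lambda_{\min}(Q_i(k))$ is strictly positive for each of these finitely many matrices, and
\[
\alpha:=\tfrac{1}{2}\min\{\lambda_{\min}(Q_i(k)):0\le k\le k^\star,\;i\in\mathcal{M}(k)\}>0
\]
is well defined. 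By construction, $Q_i(k)-\alpha I\in\mathbb{S}_+^n$ for $k\le k^\star$, and by the tail-freezing argument of the previous paragraph, the same $\alpha$ works for all $k>k^\star$.

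With this $\alpha$ in hand, $\{Q_i(k)\}_{k\in\mathbb{N},i\in\mathcal{M}(k)}$ is uniformly positive definite under $\mathcal{A}$, so Theorem~\ref{thm:SEboun} immediately delivers \eqref{eq:Qbound} and \eqref{eq:zbound}. The only genuinely nontrivial step is checking that the $Q$-updates preserve membership in $\mathbb{S}_+^n$ across the (finitely many) transient times; this is a short computation that is essentially built into the derivation leading to \eqref{eq:QchangeJI}, and I do not anticipate further obstacles.
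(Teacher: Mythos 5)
Your proof is correct and follows essentially the same route as the paper: use ultimate staticness to freeze the $Q_i$'s after the finite transient via \eqref{eq:Qunchange}, take $\alpha$ from the minimum eigenvalue over the finitely many positive definite matrices appearing up to that time, and then invoke Theorem~\ref{thm:SEboun}. The only difference is that you spell out the eigenvalue argument and the preservation of positive definiteness under the updates, which the paper leaves implicit.
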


In Theorem~\ref{thm:SEboun} and Corollary~\ref{cor:SEboun}, the network is not assumed to be connected since such an assumption is not needed for the boundedness of SE. For convergence, however, this assumption is crucial. The following lemma, which makes use of this assumption, is a key step toward establishing both the asymptotic and exponential convergence of SE:

\begin{lemma}\label{lem:SEVexpdecr}
Consider the agent network~\ref{enu:k=0}--\ref{enu:k=12rest} and the use of SE. Let $\mathcal{A}$ be given. Suppose the agent network is connected under $\mathcal{A}$ at time $k\in\mathbb{N}$, so that $h(k)<\infty$. Then,
\begin{align}
V(k+h(k))\le\frac{\gamma(k)}{\gamma(k)+1}V(k),\label{eq:V<=fV}
\end{align}
where $\alpha$ is any positive number satisfying $Q_i(\ell)-\alpha I\in\mathbb{S}_+^n$ $\forall\ell\in[k,k+h(k)]$ $\forall i\in\mathcal{M}(\ell)$, and
\begin{align*}
\gamma(k)=M\min\{M^{h(k)+1},M!\}\Bigl(\frac{4\beta}{\alpha}\Bigr)^{\min\{h(k),M-1\}}>0.
\end{align*}
\end{lemma}

Lemma~\ref{lem:SEVexpdecr} asserts that as long as the agent network is connected under $\mathcal{A}$ at some time $k$, the value of $V$ must strictly decrease from $V(k)$ at time $k$ to $V(k+h(k))$ at time $k+h(k)$, by a factor that can be explicitly calculated in \eqref{eq:V<=fV}. This result suggests that the better the ``instantaneous'' connectedness (i.e., the smaller $h(k)$), the faster the value of $V$ drops, which makes intuitive sense. However, even if $V(k)$ decreases asymptotically to zero as $k\rightarrow\infty$, it does not\footnote{For instance, let $M=3$, $\mathcal{F}=\{1,2\}$, and $(\mathcal{J}(k),\mathcal{I}(k),\mathcal{L}(k))$ be equal to $(\{3\},\{1\},\emptyset)$ if $k$ is odd and to $(\emptyset,\{2\},\{3\})$ if $k$ is even, thus defining $\mathcal{A}$ in \eqref{eq:A=FJILJIL}. Also, let $P_1=P_2=1$, $q_1=1$, and $q_2=2$, so that $z=1.5$ from \eqref{eq:sumPz=sumq}. With this $\mathcal{A}$, agent $3$ repeatedly does the following: joins the network, interacts with agent $1$ upon joining, leaves the network subsequently, and interacts with agent $2$ prior to leaving. Hence, the network is connected under $\mathcal{A}$. Moreover, $\forall k\in\mathbb{N}$, $Q_1(k)=(\frac{1}{2})^{\lceil\frac{k}{2}\rceil}$, $Q_2(k)=2-(\frac{1}{2})^{\lfloor\frac{k}{2}\rfloor}$, $Q_3(k)=(\frac{1}{2})^{\lceil\frac{k}{2}\rceil}$ if $k$ is odd, $Q_3(k)=\#$ if $k$ is even, $z_1(k)=1$, $z_2(k)=(3-(\frac{1}{2})^{\lfloor\frac{k}{2}\rfloor})/(2-(\frac{1}{2})^{\lfloor\frac{k}{2}\rfloor})$, $z_3(k)=1$ if $k$ is odd, and $z_3(k)=\#$ if $k$ is even. Thus, we have $\lim_{k\rightarrow\infty}V(k)=0$ but $\lim_{k\rightarrow\infty}z_1(k)=1\ne z.$} necessarily imply that all the $z_i(k)$'s of SE would asymptotically converge to $z$ because some of the $Q_i(k)$'s might be losing their positive definiteness as $k\rightarrow\infty$. This phenomenon suggests that network connectivity {\em and} the uniform positive definiteness of $\{Q_i(k)\}_{k\in\mathbb{N},i\in\mathcal{M}(k)}$ together might be all that are needed to establish the asymptotic convergence of SE. The following theorem shows that this is indeed the case:

\begin{theorem}\label{thm:SEasymconv}
Consider the agent network~\ref{enu:k=0}--\ref{enu:k=12rest} and the use of SE. Let $\mathcal{A}$ be given. Suppose the agent network is connected under $\mathcal{A}$ and the sequence $\{Q_i(k)\}_{k\in\mathbb{N},i\in\mathcal{M}(k)}$ is uniformly positive definite under $\mathcal{A}$. Then, $z_i(k)$ asymptotically converges to the solution $z$, i.e.,
\begin{align}
&\forall\varepsilon>0,\;\exists k\in\mathbb{N}\;\text{such that}\nonumber\displaybreak[0]\\
&\quad\forall\ell\ge k,\;\forall i\in\mathcal{M}(\ell),\;\|z_i(\ell)-z\|<\varepsilon.\label{eq:limz=z}
\end{align}
\end{theorem}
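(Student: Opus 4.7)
The strategy is to combine the monotonicity of $V$ (Lemma~\ref{lem:optnonincr}), the strict decrement supplied by Lemma~\ref{lem:SEVexpdecr}, and the pointwise bound $\|z_i(k)-z\|^2\le V(k)/\alpha$ from Theorem~\ref{thm:SEboun}. I will first show that $\lim_{k\to\infty}V(k)=0$, and then translate this into the uniform convergence \eqref{eq:limz=z} via Theorem~\ref{thm:SEboun}. The uniform positive definiteness hypothesis is essential in both halves: it furnishes a single constant $\alpha>0$ that works at every time (so Lemma~\ref{lem:SEVexpdecr} produces a \emph{common} contraction factor, and Theorem~\ref{thm:SEboun}'s bound is meaningful).

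Concretely, fix $\alpha>0$ with $Q_i(k)-\alpha I\in\mathbb{S}_+^n$ for all $k\in\mathbb{N}$ and $i\in\mathcal{M}(k)$. By Lemma~\ref{lem:optnonincr}, the non-negative sequence $\{V(k)\}$ is non-increasing, hence its limit $V^*:=\lim_{k\to\infty}V(k)\ge 0$ exists. Because the network is connected under $\mathcal{A}$, Definition~\ref{def:netwconn} gives $h(k)<\infty$ for every $k\in\mathbb{N}$, so Lemma~\ref{lem:SEVexpdecr} yields
\[
V(k+h(k))\le\rho\,V(k),\qquad\forall k\in\mathbb{N},
\]
with the fixed constant $\rho=c/(c+1)\in(0,1)$ where $c=(4\beta/\alpha)^{M-1}\cdot M\cdot M!$. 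Since $k+h(k)\ge k$ and $V$ is non-increasing, $V^*\le V(k+h(k))\le V(k)$, so letting $k\to\infty$ and applying the squeeze theorem gives $\lim_{k\to\infty}V(k+h(k))=V^*$. Passing to the limit in the displayed inequality produces $V^*\le\rho V^*$, and $\rho<1$ forces $V^*=0$.

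With $V(k)\to0$, Theorem~\ref{thm:SEboun} immediately delivers \eqref{eq:limz=z}: given $\varepsilon>0$, choose $k$ so large that $V(\ell)<\alpha\varepsilon^2$ for all $\ell\ge k$; then $\|z_i(\ell)-z\|^2\le V(\ell)/\alpha<\varepsilon^2$ uniformly for every $i\in\mathcal{M}(\ell)$.

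The main technical point to be careful about is the limit step that deduces $V^*=0$. One might naively try to iterate Lemma~\ref{lem:SEVexpdecr} along a subsequence $k_{m+1}=k_m+h(k_m)$, but this subsequence could stall if $h(k_m)=0$; a cleaner argument, as above, is to exploit monotonicity of $V$ directly and apply the squeeze theorem to the single inequality $V^*\le V(k+h(k))\le V(k)\to V^*$. The other subtlety worth emphasizing is why both hypotheses are indispensable: connectivity alone only yields $V(k)\to0$ (via the squeeze argument), while uniform positive definiteness is what converts that scalar statement into convergence of every member's state, as illustrated by the counterexample in Example~\ref{exa:alphnotexis}, where connectivity holds but $Q_1(k)\to0$ and $z_1(k)\ne z$.
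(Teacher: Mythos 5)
Your proof is correct and follows essentially the same route as the paper: both arguments use the monotonicity of $V$ from Lemma~\ref{lem:optnonincr} to obtain a limit $V^*\ge0$, invoke the contraction $V(k+h(k))\le\rho V(k)$ from Lemma~\ref{lem:SEVexpdecr} (valid at every $k$ since connectivity gives $h(k)<\infty$) to force $V^*=0$, and then convert $V(k)\to0$ into \eqref{eq:limz=z} via the uniform lower bound $\alpha\|z_i(k)-z\|^2\le V(k)$. The only difference is cosmetic: the paper closes the argument by a contradiction with a carefully chosen $\epsilon=c/((4\beta/\alpha)^{M-1}M\,M!)$, whereas you pass to the limit in the contraction inequality via the squeeze theorem, which is an equally valid (and arguably cleaner) way to execute the same step.
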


\begin{corollary}\label{cor:SEasymconv}
If the agent network is connected under $\mathcal{A}$ and the membership dynamics \eqref{eq:M=McupJ-L} are ultimately static under $\mathcal{A}$, then \eqref{eq:limz=z} holds.
\end{corollary}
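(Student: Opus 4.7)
The plan is to derive Corollary~\ref{cor:SEasymconv} as a direct consequence of Theorem~\ref{thm:SEasymconv}, by verifying that the ultimately static assumption supplies the missing hypothesis, namely the uniform positive definiteness of $\{Q_i(k)\}_{k\in\mathbb{N},i\in\mathcal{M}(k)}$ under $\mathcal{A}$. Once that is in place, both hypotheses of Theorem~\ref{thm:SEasymconv} (connectivity and uniform positive definiteness) are satisfied, and \eqref{eq:limz=z} follows immediately.

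First, I would invoke Definition~\ref{def:ultistat} to obtain a time $k^\star\in\mathbb{N}$ beyond which no membership changes occur, i.e.\ $\mathcal{J}(\ell)=\mathcal{L}(\ell)=\emptyset$ and $\mathcal{M}(\ell)=\mathcal{M}(k^\star)$ for every $\ell>k^\star$. By \eqref{eq:Qunchange} of Algorithm~\ref{alg:SE}, the update on each member's $Q_i$ under the ``no membership change'' branch is the identity, so $Q_i(\ell)=Q_i(k^\star)$ for every $\ell>k^\star$ and every $i\in\mathcal{M}(k^\star)$. In particular, the collection of $Q_i$-values appearing over the infinite tail $\ell>k^\star$ is just the finite set $\{Q_i(k^\star):i\in\mathcal{M}(k^\star)\}$, each member of which lies in $\mathbb{S}_+^n$.

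Next, I would handle the finite prefix $\ell\in\{0,1,\ldots,k^\star\}$ by observing that only finitely many agents can be members during this horizon, so only finitely many matrices $Q_i(\ell)$ (all in $\mathbb{S}_+^n$) are generated. Because any finite collection of positive definite matrices is uniformly positive definite, there exists $\alpha_1>0$ with $Q_i(\ell)-\alpha_1 I\in\mathbb{S}_+^n$ for all such $\ell$ and $i$. Combining this with the tail analysis, let $\alpha_2>0$ be chosen so that $Q_i(k^\star)-\alpha_2 I\in\mathbb{S}_+^n$ for every $i\in\mathcal{M}(k^\star)$, and set $\alpha:=\min\{\alpha_1,\alpha_2\}>0$. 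Then $Q_i(k)-\alpha I\in\mathbb{S}_+^n$ for every $k\in\mathbb{N}$ and every $i\in\mathcal{M}(k)$, which is precisely the uniform positive definiteness of $\{Q_i(k)\}_{k\in\mathbb{N},i\in\mathcal{M}(k)}$ per Definition~\ref{def:upd}. (This is essentially the same argument used to establish Corollary~\ref{cor:SEboun}, so that corollary may also be cited directly.)

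Having established uniform positive definiteness and the given connectivity of the agent network under $\mathcal{A}$, Theorem~\ref{thm:SEasymconv} applies verbatim to yield \eqref{eq:limz=z}. I do not anticipate a real obstacle here; the only subtlety is being careful about the two regimes (before and after $k^\star$) when bounding $Q_i(k)$ away from zero, since the SE update rule treats them differently via \eqref{eq:Qunchange} and \eqref{eq:Qchange}, but both regimes only produce finitely many distinct positive definite matrices, which makes the uniform lower bound immediate.
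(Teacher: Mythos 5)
Your proposal is correct and follows essentially the same route as the paper: the paper also proves this corollary by citing Theorem~\ref{thm:SEasymconv} together with the argument in the proof of Corollary~\ref{cor:SEboun}, which establishes uniform positive definiteness of $\{Q_i(k)\}_{k\in\mathbb{N},i\in\mathcal{M}(k)}$ from the ultimately static assumption exactly as you do (finitely many positive definite matrices before $k^\star$, constant matrices after). Your two-regime bookkeeping with $\alpha=\min\{\alpha_1,\alpha_2\}$ is just a slightly more explicit rendering of the same argument.
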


\begin{proof}
The proof is an immediate consequence of Theorem~\ref{thm:SEasymconv} and the proof of Corollary~\ref{cor:SEboun}.
\end{proof}

Note that the conclusion of Theorem~\ref{thm:SEasymconv} is written as \eqref{eq:limz=z} instead of ``$\lim_{k\rightarrow\infty}z_i(k)=z$'' because the former excludes cases where $z_i(k)=\#$, while the latter does not and, thus, is not well-defined. More important, with Theorem~\ref{thm:SEasymconv} and Corollary~\ref{cor:SEasymconv}, we achieve the paper's objective of developing a distributed asynchronous algorithm SE that asymptotically solves \eqref{eq:sumPz=sumq} over the agent network~\ref{enu:k=0}--\ref{enu:k=12rest}, while possessing properties~\ref{enu:exog}--\ref{enu:memo} stated in Section~\ref{sec:netwmodeprobform}.

Finally, we provide a sufficient condition on the exponential convergence of SE and derive a bound on its convergence rate, in terms of $h^*$. Since $h^*=0$ is a trivial case (that corresponds to $\mathcal{M}(k)$ containing exactly one and the same agent $i\in\{1,2,\ldots,M\}$ with $z_i(k)=z$ $\forall k\in\mathbb{N}$), below it is assumed that $h^*>0$:

\begin{theorem}\label{thm:SEconvrate}
Consider the agent network~\ref{enu:k=0}--\ref{enu:k=12rest} and the use of SE. Let $\mathcal{A}$ be given. Suppose the agent network is uniformly connected under $\mathcal{A}$ with $h^*>0$ and the sequence $\{Q_i(k)\}_{k\in\mathbb{N},i\in\mathcal{M}(k)}$ is uniformly positive definite under $\mathcal{A}$. Then,
\begin{align}
&V(k)\le V(0)\Bigl(\frac{\gamma^*}{\gamma^*+1}\Bigr)^{\lfloor\frac{k}{h^*}\rfloor},\quad\forall k\in\mathbb{N},\label{eq:V<=V0f}\displaybreak[0]\\
&\|z_i(k)-z\|^2\le\frac{V(0)}{\alpha}\Bigl(\frac{\gamma^*}{\gamma^*+1}\Bigr)^{\lfloor\frac{k}{h^*}\rfloor},\quad\forall k\in\mathbb{N},\;\forall i\in\mathcal{M}(k),\label{eq:|z-z|<=V0/alphaf}
\end{align}
where $\alpha$ is any positive number satisfying $Q_i(k)-\alpha I\in\mathbb{S}_+^n$ $\forall k\in\mathbb{N}$ $\forall i\in\mathcal{M}(k)$, and
\begin{align*}
\gamma^*=M\min\{M^{h^*+1},M!\}\Bigl(\frac{4\beta}{\alpha}\Bigr)^{\min\{h^*,M-1\}}>0.
\end{align*}
\end{theorem}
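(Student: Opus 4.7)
The plan is to combine Lemma~\ref{lem:SEVexpdecr} (which gives a one-shot multiplicative decrease of $V$ whenever the network is connected at a time $k$) with the uniform connectivity hypothesis $h^*<\infty$, and then iterate. Let $r=\frac{(4\beta/\alpha)^{M-1}\cdot M\cdot M!}{(4\beta/\alpha)^{M-1}\cdot M\cdot M!+1}\in(0,1)$ denote the contraction factor appearing in Lemma~\ref{lem:SEVexpdecr}. The nonincreasingness of $V$, already noted via Lemma~\ref{lem:optnonincr} (applied at every $k\in\mathbb{P}$ using \eqref{eq:z=z}, \eqref{eq:Q=Q}, \eqref{eq:sumQ=sumQ}, and \eqref{eq:zopti}, which hold by construction of SE), will be the other ingredient I need in order to ``fill in'' the time indices between the connectedness-induced contraction points.

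First I would argue the inequality \eqref{eq:V<=V0f} by induction on $n=\lfloor k/h^*\rfloor$. For the base case $n=0$, $V(k)\le V(0)$ follows from the nonincreasingness of $V$. For the inductive step, assume $V(m)\le V(0)\,r^{n}$ holds for all $m\in[nh^*,(n+1)h^*)$; in particular set $m=nh^*$. Since the network is connected under $\mathcal{A}$ at time $m$ (because uniform connectivity implies $h(m)\le h^*<\infty$), Lemma~\ref{lem:SEVexpdecr} gives $V(m+h(m))\le r\,V(m)\le V(0)\,r^{n+1}$. Because $m+h(m)\le nh^*+h^*=(n+1)h^*$ and $V$ is nonincreasing, it follows that $V(\ell)\le V(m+h(m))\le V(0)\,r^{n+1}$ for every $\ell\in[(n+1)h^*,(n+2)h^*)$, completing the induction. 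Thus \eqref{eq:V<=V0f} holds for all $k\in\mathbb{N}$.

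Next, inequality \eqref{eq:|z-z|<=V0/alphaf} is immediate: Theorem~\ref{thm:SEboun} already supplies the bound $\|z_i(k)-z\|^2\le V(k)/\alpha$ for every $k\in\mathbb{N}$ and every $i\in\mathcal{M}(k)$ (using the same $\alpha$ that witnesses uniform positive definiteness of $\{Q_i(k)\}$), and substituting \eqref{eq:V<=V0f} into this bound yields \eqref{eq:|z-z|<=V0/alphaf} verbatim.

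I do not expect any serious obstacle here, because the heavy lifting is already done by Lemma~\ref{lem:SEVexpdecr} and Theorem~\ref{thm:SEboun}; the only subtle point is to carefully align the contraction obtained at the (variable) instants $m+h(m)$ with the ``clock'' $nh^*$ that appears in the statement, and uniform connectivity is precisely what lets this alignment go through without loss. A minor bookkeeping check is that $h(m)\le h^*$ is applied at each induction step, which is valid because $h^*=\sup_{k\in\mathbb{N}}h(k)$ in \eqref{eq:hstar} bounds $h$ at every time, not just at multiples of $h^*$.
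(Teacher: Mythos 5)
Your proposal is correct and follows essentially the same route as the paper's proof: apply Lemma~\ref{lem:SEVexpdecr} at the times $\ell h^*$ (valid since $h(\ell h^*)\le h^*<\infty$ by uniform connectivity), use the monotonicity of $V$ from Lemma~\ref{lem:optnonincr} to bridge from $\ell h^*+h(\ell h^*)$ to $(\ell+1)h^*$ and to arbitrary $k$, and then divide by $\alpha$ to get the bound on $\|z_i(k)-z\|^2$. The only cosmetic difference is that you package the iteration as an explicit induction on $\lfloor k/h^*\rfloor$, whereas the paper writes it as a direct chain of inequalities $V((\ell+1)h^*)\le V(\ell h^*+h(\ell h^*))\le rV(\ell h^*)$.
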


\begin{corollary}\label{cor:SEconvrate}
If the agent network is uniformly connected under $\mathcal{A}$ with $h^*>0$ and the membership dynamics \eqref{eq:M=McupJ-L} are ultimately static under $\mathcal{A}$, then \eqref{eq:V<=V0f} and \eqref{eq:|z-z|<=V0/alphaf} hold for some $\alpha>0$.
\end{corollary}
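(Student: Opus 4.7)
The corollary is a direct descendant of Theorem~\ref{thm:SEconvrate}, in exactly the same spirit as Corollary~\ref{cor:SEasymconv} is of Theorem~\ref{thm:SEasymconv}. The plan is therefore very short: verify the two hypotheses of Theorem~\ref{thm:SEconvrate}---uniform connectivity (handed to us by assumption) and uniform positive definiteness of $\{Q_i(k)\}_{k\in\mathbb{N},i\in\mathcal{M}(k)}$ under $\mathcal{A}$ (to be derived from the ultimately static assumption)---and then simply quote the theorem to obtain \eqref{eq:V<=V0f} and \eqref{eq:|z-z|<=V0/alphaf}.

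All of the actual work therefore lies in showing that ultimate staticness of the membership dynamics forces $\{Q_i(k)\}$ to be uniformly positive definite. This is precisely the fact already extracted inside the proof of Corollary~\ref{cor:SEboun}, so I would simply reuse that argument. The idea runs as follows. By Definition~\ref{def:ultistat}, there exists $k^\star\in\mathbb{N}$ such that $\mathcal{J}(\ell)=\mathcal{L}(\ell)=\emptyset$ for every $\ell>k^\star$; consequently $\mathcal{M}(\ell)=\mathcal{M}(k^\star)$ and, by the update rule \eqref{eq:Qunchange}, $Q_i(\ell)=Q_i(k^\star)$ for all $\ell\ge k^\star$ and $i\in\mathcal{M}(\ell)$. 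Since $\mathcal{M}(k^\star)$ is finite and each $Q_i(k^\star)$ lies in $\mathbb{S}_+^n$, the tail of $\{Q_i(k)\}$ is bounded below, in the positive semidefinite order, by $\alpha_\infty I$, where $\alpha_\infty$ is the smallest eigenvalue appearing among these finitely many positive definite matrices. For the finite prefix $k\in\{0,1,\ldots,k^\star\}$, the collection $\{Q_i(k):0\le k\le k^\star,\,i\in\mathcal{M}(k)\}$ is also finite with every member in $\mathbb{S}_+^n$, so taking $\alpha_{\text{pre}}$ to be the minimum of their smallest eigenvalues and then $\alpha=\min(\alpha_{\text{pre}},\alpha_\infty)>0$ yields $Q_i(k)-\alpha I\in\mathbb{S}_+^n$ for all $k\in\mathbb{N}$ and $i\in\mathcal{M}(k)$, which is exactly Definition~\ref{def:upd}.

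With uniform positive definiteness in hand and uniform connectivity assumed, Theorem~\ref{thm:SEconvrate} applies verbatim with this same $\alpha$ to produce both \eqref{eq:V<=V0f} and \eqref{eq:|z-z|<=V0/alphaf}. I anticipate no genuine obstacle: the argument is bookkeeping that recycles pieces already established earlier in the appendix, and the proof can plausibly be condensed to a single sentence of the form \emph{``Immediate from Theorem~\ref{thm:SEconvrate} and the proof of Corollary~\ref{cor:SEboun},''} mirroring how Corollary~\ref{cor:SEasymconv} was handled.
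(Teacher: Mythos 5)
Your proposal is correct and matches the paper's proof exactly: the paper likewise derives uniform positive definiteness of $\{Q_i(k)\}_{k\in\mathbb{N},i\in\mathcal{M}(k)}$ from the ultimately static assumption via the argument in the proof of Corollary~\ref{cor:SEboun} (finite prefix plus constant tail under \eqref{eq:Qunchange}), and then invokes Theorem~\ref{thm:SEconvrate}. The paper indeed condenses this to the one-sentence form you anticipated.
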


\begin{proof}
The proof follows immediately from Theorem~\ref{thm:SEconvrate} and the proof of Corollary~\ref{cor:SEboun}.
\end{proof}

Observe from Theorems~\ref{thm:SEasymconv} and~\ref{thm:SEconvrate} (or Corollaries~\ref{cor:SEasymconv} and~\ref{cor:SEconvrate}) that {\em connectivity} helps ensure {\em asymptotic convergence}, while {\em uniform connectivity} helps ensure {\em exponential convergence}.

\section{Simulation Studies}\label{sec:simustud}

In this section, we complement the above analysis with simulation. Section~\ref{ssec:illuSEagennetw} illustrates the behavior of SE in a volatile agent network. Section~\ref{ssec:compPEGEexisalgowirenetw} compares the performance of PE and GE with a few existing algorithms in multi-hop wireless networks with fixed vertex sets.

\subsection{Illustration of SE in an Agent Network}\label{ssec:illuSEagennetw}

In this subsection, we simulate SE in an agent network described by~\ref{enu:k=0}--\ref{enu:k=12rest} with the following settings: $M=100$; $\mathcal{F}=\{1,2,\ldots,50\}$; $n=4$; for each $i\in\mathcal{F}$, $P_i\in\mathbb{S}_+^n$ and $q_i\in\mathbb{R}^n$ are randomly generated; and for each $k\in\mathbb{P}$, the sets $\mathcal{J}(k)$, $\mathcal{I}(k)$, and $\mathcal{L}(k)$ are random subsets of the sets $\{1,2,\ldots,M\}-\mathcal{M}(k-1)$, $\mathcal{M}(k-1)$, and $\mathcal{M}(k-1)$, respectively, such that $\mathcal{I}(k)\cap\mathcal{L}(k)=\emptyset$, $\mathcal{I}(k)\ne\emptyset$, and $\mathcal{L}(k)\subsetneq\mathcal{M}(k-1)$ according to~\ref{enu:k=12}. Note that with these settings, the agent network is volatile with random, unpredictable member interactions and membership dynamics. Thus, the behavior of SE in such a network is indicative of its effectiveness.

Figure~\ref{fig:illuexam} depicts the simulation results. The top subplot of Figure~\ref{fig:illuexam} shows the number of members $|\mathcal{M}(k)|$ as a function of time $k$. The middle subplot shows the actions taken by two selected agents, agent $1$ and agent $51$, at each time $k$, where a total of five actions are possible as labeled on the vertical axis, and only the actions of two agents are shown to avoid clogging the plot. Also, the actions labeled ``$i\in\mathcal{M}(k)$ but idle'' and ``$i\notin\mathcal{M}(k)$ but idle'' are abbreviations for $i\in\mathcal{M}(k-1)-(\mathcal{I}(k)\cup\mathcal{L}(k))$ and $i\in\{1,2,\ldots,M\}-(\mathcal{M}(k-1)\cup\mathcal{J}(k))$, respectively. Lastly, the bottom subplot shows, on a logarithmic scale and as functions of time $k$, the maximum estimation error $\max_{i\in\mathcal{M}(k)}\|z_i(k)-z\|$ among the members in $\mathcal{M}(k)$, the minimum such error $\min_{i\in\mathcal{M}(k)}\|z_i(k)-z\|$, the estimation error $\|z_1(k)-z\|$ of agent $1$ whenever it is a member, and the estimation error $\|z_{51}(k)-z\|$ of agent $51$ whenever it is a member. Observe from the figure that, despite the rapidly fluctuating number of members, and despite the randomly generated actions of agents that include numerous membership changes, all the estimates $z_i(k)$'s gradually approach the unknown solution $z$, demonstrating the effectiveness of SE.

\begin{figure}[!t]
\centering\includegraphics[width=0.85\linewidth]{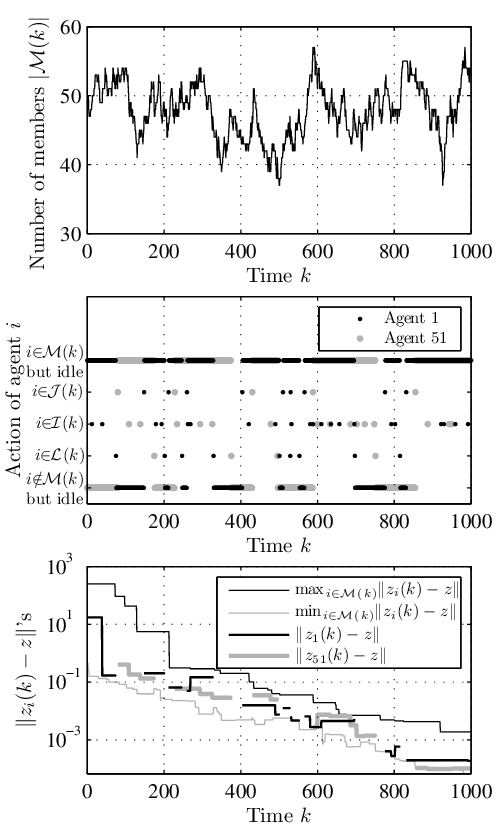}
\caption{Behavior of SE in a volatile agent network with random, unpredictable member interactions and membership dynamics.}
\label{fig:illuexam}
\end{figure}

\subsection{Comparison of PE and GE with Existing Algorithms in Wireless Networks}\label{ssec:compPEGEexisalgowirenetw}

In this subsection, we compare through simulation the performance of five algorithms---namely, PE and GE from Algorithms~\ref{alg:PE} and~\ref{alg:GE}, the two average-consensus-based algorithms from \cite{XiaoL05, XiaoL06} called {\em Maximum-Degree Weights} (MDW) and {\em Metropolis Weights} (MW), and flooding---in solving problem \eqref{eq:sumPz=sumqintro} of different sizes, over multi-hop wireless networks modeled by random geometric graphs of different sizes and densities. The simulation settings are as follows: to methodically evaluate the algorithm performance, we let the simulation be governed by three parameters---the number of nodes $N$ that represents {\em network sizes}, the average number of neighbors $\frac{2L}{N}$ that represents {\em network densities} (the meaning of $\frac{2L}{N}$ will be clear shortly), and the number of dimensions $n$ that represents {\em problem sizes}---and write them as a $3$-tuple $(N,\frac{2L}{N},n)$. To understand their individual impact, we vary these parameters one at a time, choosing the values of $(N,\frac{2L}{N},n)$ as:
\begin{enumerate}
\renewcommand{\theenumi}{S\arabic{enumi}}\itemsep-\parsep
\item $(\mathit{50},20,4),(\mathit{100},20,4),\ldots,(\mathit{500},20,4)$;\label{enu:netwsize}
\item $(200,\mathit{10},4),(200,\mathit{20},4),\ldots,(200,\mathit{100},4)$; and\label{enu:netwdens}
\item $(200,20,\mathit{2}),(200,20,\mathit{4}),\ldots,(200,20,\mathit{20})$.\label{enu:probsize}
\end{enumerate}

For each value of $(N,\frac{2L}{N},n)$ in~\ref{enu:netwsize}--\ref{enu:probsize}, we consider $50$ random scenarios. For each scenario, we generate a wireless network with $N$ nodes and $L$ edges by randomly and equiprobably placing $N$ nodes on a unit square in $\mathbb{R}^2$ and gradually increasing the one-hop radius until the number of edges is $L$ or, equivalently, the average number of neighbors is $\frac{2L}{N}$ (this explains the meaning of $\frac{2L}{N}$). If the resulting network is not connected, it is discarded and the preceding process is repeated. We also generate an instance of problem \eqref{eq:sumPz=sumqintro} with $n$ dimensions by factoring each $P_i\in\mathbb{S}_+^n$ as $P_i=X_i^TX_i$ and letting both $X_i\in\mathbb{R}^{n\times n}$ and $q_i\in\mathbb{R}^n$ have random entries drawn independently from the standard normal distribution. Subsequently, we simulate PE, GE, MDW, and MW and let the gossiping pair in Step~3 of PE, as well as the interacting group in Step~3 of GE, be randomly and equiprobably chosen. We then count the number of real-number transmissions needed for each algorithm to converge (including initialization overhead, if any), where the convergence criterion is $\max_{i\in\mathcal{V}}\|z_i(k)-z\|<0.005$. To count such numbers, we use the fact that PE, GE, MDW, and MW require, respectively, $\frac{n(n+1)}{2}N$, $\frac{n(n+1)}{2}N$, $0$, and $0$ real-number transmissions to initialize and $2n$, $n(|\mathcal{N}_i|+1)$, $(\frac{n(n+1)}{2}+n)N$, and $(\frac{n(n+1)}{2}+n)N$ real-number transmissions per iteration (in the case of GE, per iteration initiated by node $i$). Finally, for each value of $(N,\frac{2L}{N},n)$ in~\ref{enu:netwsize}--\ref{enu:probsize} and for each algorithm, we average over the $50$ scenarios and record the resulting number needed to converge. As a benchmark, we also record the number needed by flooding to exactly solve \eqref{eq:sumPz=sumqintro} (i.e., $(\frac{n(n+1)}{2}+n)N^2$).

Figure~\ref{fig:nrtc} displays the simulation results, showing in its subplots~(a), (b), and~(c) the number of real-number transmissions needed as a function of the values of $(N,\frac{2L}{N},n)$ in~\ref{enu:netwsize}, \ref{enu:netwdens}, and~\ref{enu:probsize}, respectively. Notice from the figure that:
\begin{itemize}
\itemsep-\parsep
\item Generally, the larger the network size, or the lower the network density, or the larger the problem size, the higher the number needed. One exception to this trend is flooding in subplot~(b), which is expected since its number depends only on $N$ and $n$ and not on $L$.
\item Among the five algorithms, MDW has, on average, the worst {\em bandwidth/\linebreak[0]energy efficiency}, requiring by far the most real-number transmissions to converge. Nonetheless, MDW does outperform flooding when the network is sufficiently dense.
\item PE is not as efficient as MW in subplots~(a) and~(b). However, it becomes more efficient than MW when the problem size is sufficiently large, in subplot~(c). This is likely due to PE being $O(n)$ and MW being $O(n^2)$ in the number of real-number transmissions per iteration.
\item Among the five algorithms, GE has the best bandwidth/\linebreak[0]energy efficiency and scalability with respect to $N$ and $n$. Indeed, GE is at least $2.5$ times and up to $8$ times more efficient than the next best algorithm---be it MW or PE---in all the values of $(N,\frac{2L}{N},n)$ considered.
\end{itemize}

\begin{figure*}[!t]
\centering\subfigure[Varying network sizes.]{\includegraphics[width=0.3\textwidth]{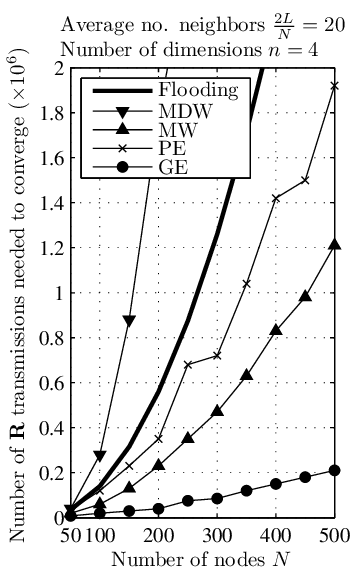}}\quad\subfigure[Varying network densities.]{\includegraphics[width=0.3\textwidth]{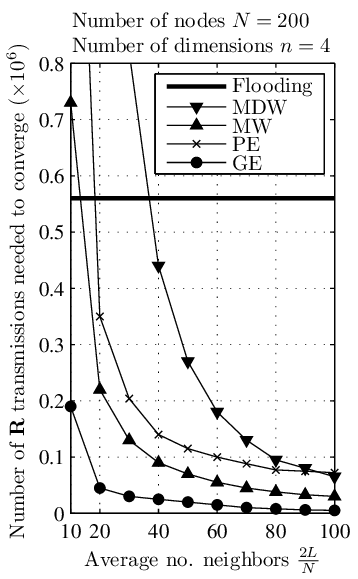}}\quad\subfigure[Varying problem sizes.]{\includegraphics[width=0.3\textwidth]{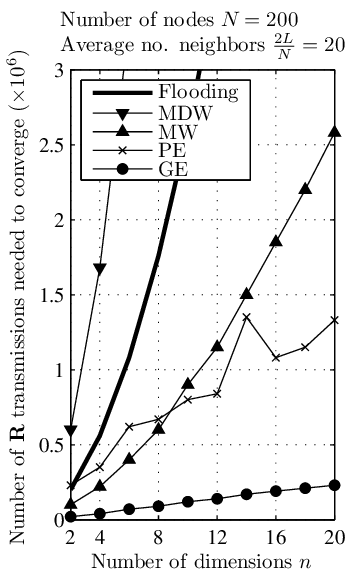}}
\caption{Performance of PE, GE, MDW, MW, and flooding in multi-hop wireless networks with varying network sizes, network densities, and problem sizes.}
\label{fig:nrtc}
\end{figure*}

\section{Conclusion}\label{sec:conc}

In this paper, we have developed SE, a distributed asynchronous algorithm for solving symmetric positive definite systems of linear equations over agent networks with arbitrary member interactions and membership dynamics. To facilitate the development, we have introduced a time-varying Lyapunov-like function and a generalized concept of network connectivity. Based on these entities, we have derived sufficient conditions for ensuring the boundedness, asymptotic convergence, and exponential convergence of SE, as well as a bound on its convergence rate. We have also shown that SE reduces to known algorithms in very special cases. Finally, we have demonstrated through extensive simulation the effectiveness and efficiency of SE in a variety of settings.

\appendix

Throughout the Appendix, for any $x\in\mathbb{R}^n$ and any $P\in\mathbb{S}_+^n$, we write $x^Tx$ and $x^TPx$ as $\|x\|^2$ and $\|x\|_P^2$, respectively. Also, for any $k\in\mathbb{N}$ and any nonempty $X\subset\mathcal{M}(k)$, we let
\begin{align}
z_X^k=\Bigl(\sum_{i\in X}Q_i(k)\Bigr)^{-1}\sum_{i\in X}Q_i(k)z_i(k),\label{eq:zxk}
\end{align}
so that from \eqref{eq:zopti},
\begin{align}
z_i(k)=z_{\mathcal{I}(k)\cup\mathcal{L}(k)}^{k-1},\quad\forall k\in\mathbb{P},\;\forall i\in\mathcal{J}(k)\cup\mathcal{I}(k).\label{eq:zi=zIL}
\end{align}

\section*{Proof of Lemma~\ref{lem:optnonincr}}

Let $\mathcal{A}$ and $k\in\mathbb{P}$ be given. To prove the first statement, pick any $Q_i(k)\in\mathbb{S}_+^n$ $\forall i\in\mathcal{J}(k)\cup\mathcal{I}(k)$ satisfying \eqref{eq:sumQ=sumQ} and consider the following constrained convex optimization problem:
\begin{align}
\begin{array}{cl}\displaystyle\operatornamewithlimits{minimize}_{(z_i(k))_{i\in\mathcal{J}(k)\cup\mathcal{I}(k)}} & \displaystyle\sum_{i\in\mathcal{J}(k)\cup\mathcal{I}(k)}\!\!\!\!z_i(k)^TQ_i(k)z_i(k)\\ \operatorname{subject\,to} & \eqref{eq:sumQz=sumQz}.\end{array}\label{eq:minzsumzQz}
\end{align}
By forming the Lagrangian of problem \eqref{eq:minzsumzQz} and setting its gradient to zero, we see that problem \eqref{eq:minzsumzQz} has a unique solution $(z_i(k))_{i\in\mathcal{J}(k)\cup\mathcal{I}(k)}$ given by \eqref{eq:zopti}. Moreover, by substituting \eqref{eq:zopti} into the objective function and using \eqref{eq:sumQ=sumQ}, we see that the optimal value of problem \eqref{eq:minzsumzQz} depends only on $(z_i(k-1),Q_i(k-1))_{i\in\mathcal{I}(k)\cup\mathcal{L}(k)}$ and not on the arbitrary $(Q_i(k))_{i\in\mathcal{J}(k)\cup\mathcal{I}(k)}$. Hence, problem \eqref{eq:minsumzQz} has a nonempty, convex set of solutions given by $\{(z_i(k),Q_i(k))_{i\in\mathcal{J}(k)\cup\mathcal{I}(k)}:\text{\eqref{eq:zopti} and \eqref{eq:sumQ=sumQ} hold}\}$, i.e., the first statement is true. For the second statement, note from \eqref{eq:V-V}, \eqref{eq:sumQz=sumQz}, \eqref{eq:sumQ=sumQ}, and \eqref{eq:zi=zIL} that
\begin{align}
&V(k)-V(k-1)\nonumber\displaybreak[0]\\
=&\sum_{i\in\mathcal{J}(k)\cup\mathcal{I}(k)\!\!\!\!\!\!\!\!\!\!\!\!\!\!\!\!\!\!\!\!\!\!}z_i(k)^TQ_i(k)z_i(k)-\sum_{i\in\mathcal{I}(k)\cup\mathcal{L}(k)\!\!\!\!\!\!\!\!\!\!\!\!\!\!\!\!\!\!\!\!\!\!}z_i(k\!-\!1)^TQ_i(k\!-\!1)z_i(k\!-\!1)\nonumber\displaybreak[0]\\
=&-\!\Bigl(\sum_{i\in\mathcal{I}(k)\cup\mathcal{L}(k)\!\!\!\!\!\!\!\!\!\!\!\!\!\!\!\!\!\!\!\!\!\!}z_i(k-1)^TQ_i(k-1)z_i(k-1)\!+\!\sum_{i\in\mathcal{J}(k)\cup\mathcal{I}(k)\!\!\!\!\!\!\!\!\!\!\!\!\!\!\!\!\!\!\!\!\!\!}(z_{\mathcal{I}(k)\cup\mathcal{L}(k)}^{k-1})^T\nonumber\displaybreak[0]\\
&\times Q_i(k)z_{\mathcal{I}(k)\cup\mathcal{L}(k)}^{k-1}-2\sum_{i\in\mathcal{J}(k)\cup\mathcal{I}(k)\!\!\!\!\!\!\!\!\!\!\!\!\!\!\!\!\!\!\!\!\!\!}z_i(k)^TQ_i(k)z_{\mathcal{I}(k)\cup\mathcal{L}(k)}^{k-1}\Bigr)\nonumber\displaybreak[0]\\
=&-\!\Bigl(\sum_{i\in\mathcal{I}(k)\cup\mathcal{L}(k)\!\!\!\!\!\!\!\!\!\!\!\!\!\!\!\!\!\!\!\!\!\!}z_i(k-1)^TQ_i(k-1)z_i(k-1)\!+\!\sum_{i\in\mathcal{I}(k)\cup\mathcal{L}(k)\!\!\!\!\!\!\!\!\!\!\!\!\!\!\!\!\!\!\!\!\!\!}(z_{\mathcal{I}(k)\cup\mathcal{L}(k)}^{k-1})^T\nonumber\displaybreak[0]\\
&\times Q_i(k\!\!-\!\!1)z_{\mathcal{I}(k)\cup\mathcal{L}(k)}^{k-1}\!\!-\!\!2\sum_{i\in\mathcal{I}(k)\cup\mathcal{L}(k)\!\!\!\!\!\!\!\!\!\!\!\!\!\!\!\!\!\!\!\!\!\!}z_i(k\!\!-\!\!1)^TQ_i(k\!\!-\!\!1)z_{\mathcal{I}(k)\cup\mathcal{L}(k)}^{k-1}\Bigr)\nonumber\displaybreak[0]\\
=&-\!\!\sum_{i\in\mathcal{I}(k)\cup\mathcal{L}(k)\!\!\!\!\!\!\!\!\!\!\!\!\!\!\!\!\!\!\!\!\!\!}(z_i(k\!\!-\!\!1)\!\!-\!\!z_{\mathcal{I}(k)\cup\mathcal{L}(k)}^{k-1})^TQ_i(k\!\!-\!\!1)(z_i(k\!\!-\!\!1)\!\!-\!\!z_{\mathcal{I}(k)\cup\mathcal{L}(k)}^{k-1}).\label{eq:VV=sumzzQzz}
\end{align}
Since the right-hand side of \eqref{eq:VV=sumzzQzz} is nonpositive, $V(k)\le V(k-1)$. Moreover, from \eqref{eq:zxk} and \eqref{eq:VV=sumzzQzz}, $V(k)=V(k-1)$ if and only if $z_i(k-1)$ $\forall i\in\mathcal{I}(k)\cup\mathcal{L}(k)$ are equal.

\section*{Proof of Proposition~\ref{pro:netwconn}}

First, suppose the graph $(\mathcal{F},\mathcal{E}_\infty)$ is connected. Pick any $k\in\mathbb{N}$ and let $k'=\inf\{\tilde{k}\ge k+1:\forall\{i,j\}\in\mathcal{E}_\infty,\;\exists\bar{k}\in[k+1,\tilde{k}]\;\text{such that}\;\{i,j\}\subset\mathcal{I}(\bar{k})\}$. Then, from \eqref{eq:Einfty}, $k'<\infty$. Due to \eqref{eq:Cinitial}, \eqref{eq:C}, and $(\mathcal{F},\mathcal{E}_\infty)$ being connected, we have $C_i(k,k')=\mathcal{F}$ $\forall i\in\mathcal{F}$. It follows from \eqref{eq:Dk} and \eqref{eq:hk} that $k'\in D_k$ and, thus, $h(k)\le k'-k<\infty$. From Definition~\ref{def:netwconn}, the network is connected under $\mathcal{A}$. Conversely, suppose the network is connected under $\mathcal{A}$, i.e., $h(k)<\infty$ $\forall k\in\mathbb{N}$. For each $k\in\mathbb{N}$, let $\tilde{\mathcal{E}}(k)=\cup_{k'=k+1}^{k+h(k)}\{\{i,j\}\subset\mathcal{F}:\{i,j\}\subset\mathcal{I}(k')\}$. Then, due to \eqref{eq:Cinitial}, \eqref{eq:C}, \eqref{eq:Dk}, and \eqref{eq:hk}, the graph $(\mathcal{F},\tilde{\mathcal{E}}(k))$ is connected. Let $\boldsymbol{\mathcal{E}}$ denote the collection of all nonempty edge sets associated with vertex set $\mathcal{F}$. Clearly, $\boldsymbol{\mathcal{E}}$ contains $2^{|\mathcal{F}|(|\mathcal{F}|-1)/2}-1$ sets and $\tilde{\mathcal{E}}(k)\in\boldsymbol{\mathcal{E}}$ $\forall k\in\mathbb{N}$. Then, $\exists\bar{\mathcal{E}}\in\boldsymbol{\mathcal{E}}$ such that $\tilde{\mathcal{E}}(k)=\bar{\mathcal{E}}$ for infinitely many $k\in\mathbb{N}$. From \eqref{eq:Einfty}, we see that $\bar{\mathcal{E}}\subset\mathcal{E}_\infty$. Therefore, the graph $(\mathcal{F},\mathcal{E}_\infty)$ is connected.

\section*{Proof of Theorem~\ref{thm:SEboun}}

Let $\mathcal{A}$ be given. From \eqref{eq:sumQ=sumQ0} and \eqref{eq:Qinitial}, $Q_i(k)\le\sum_{i\in\mathcal{M}(0)}P_i$ $\forall k\in\mathbb{N}$ $\forall i\in\mathcal{M}(k)$. Thus, \eqref{eq:Qbound} holds, i.e., each $Q_i(k)$ is bounded. To derive \eqref{eq:zbound}, suppose $\{Q_i(k)\}_{k\in\mathbb{N},i\in\mathcal{M}(k)}$ is uniformly positive definite under $\mathcal{A}$ and let $\alpha>0$ be such that $Q_i(k)-\alpha I\in\mathbb{S}_+^n$ $\forall k\in\mathbb{N}$ $\forall i\in\mathcal{M}(k)$. Then, from \eqref{eq:V} and Lemma~\ref{lem:optnonincr}, $\alpha\sum_{i\in\mathcal{M}(k)}\|z_i(k)-z\|^2\le V(k)\le V(0)$ $\forall k\in\mathbb{N}$. Therefore, \eqref{eq:zbound} is satisfied, i.e., each $z_i(k)$ is bounded.

\section*{Proof of Corollary~\ref{cor:SEboun}}

Suppose the membership dynamics \eqref{eq:M=McupJ-L} are ultimately static under $\mathcal{A}$. Then, by Definition~\ref{def:ultistat}, $\exists k\in\mathbb{N}$ such that $\forall\ell>k$, $\mathcal{M}(\ell)=\mathcal{M}(k)$. Due to \eqref{eq:Qinitial}, \eqref{eq:Qunchange}, and \eqref{eq:Qchange}, $\exists\alpha>0$ such that $Q_i(\ell)>\alpha I$ $\forall\ell\le k$ $\forall i\in\mathcal{M}(\ell)$. Due again to \eqref{eq:Qunchange}, $Q_i(\ell)=Q_i(k)$ $\forall\ell\ge k+1$ $\forall i\in\mathcal{M}(\ell)$. Hence, $\{Q_i(k)\}_{k\in\mathbb{N},i\in\mathcal{M}(k)}$ is uniformly positive definite under $\mathcal{A}$. It follows from Theorem~\ref{thm:SEboun} that \eqref{eq:Qbound} and \eqref{eq:zbound} hold.

\section*{Proof of Lemma~\ref{lem:SEVexpdecr}}

Let $\mathcal{A}$ be given. Suppose the agent network is connected under $\mathcal{A}$ at some time $k\in\mathbb{N}$, i.e., $h(k)<\infty$. If $h(k)=0$, then from \eqref{eq:hk}, \eqref{eq:Dk}, and \eqref{eq:Cinitial}, $\mathcal{M}(k)=\{i\}$ for some $i\in\{1,2,\ldots,M\}$. Also, from \eqref{eq:sumQz=sumQz0}, \eqref{eq:sumQ=sumQ0}, \eqref{eq:zinitial}, \eqref{eq:Qinitial}, and \eqref{eq:sumPz=sumq}, $z_i(k)=z$. It follows that $V(k)=0$ and, thus, \eqref{eq:V<=fV} holds. Now suppose $h(k)\in\mathbb{P}$ and consider the following:

\begin{lemma}\label{lem:zXtoeta<=zitoeta}
For any $\ell\in\mathbb{N}$, any nonempty $X\subset\mathcal{M}(\ell)$, and any $\eta\in\mathbb{R}^n$, $\sum_{i\in X}\|z_X^\ell-\eta\|_{Q_i(\ell)}^2\le\sum_{i\in X}\|z_i(\ell)-\eta\|_{Q_i(\ell)}^2$.
\end{lemma}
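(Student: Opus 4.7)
The plan is to prove Lemma~\ref{lem:zXtoeta<=zitoeta} by a Pythagorean-style decomposition around the weighted average $z_X^\ell$, which is the natural ``generalized centroid'' defined by the weights $Q_i(\ell)$. The key observation is that $z_X^\ell$ in \eqref{eq:zxk} is precisely the point at which the first-order optimality condition for the weighted quadratic $\sum_{i\in X}\|y-z_i(\ell)\|_{Q_i(\ell)}^2$ is satisfied, since by definition of $z_X^\ell$ we have $\sum_{i\in X}Q_i(\ell)z_i(\ell)=\bigl(\sum_{i\in X}Q_i(\ell)\bigr)z_X^\ell$, which equivalently reads $\sum_{i\in X}Q_i(\ell)(z_i(\ell)-z_X^\ell)=0$.

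Given this orthogonality-type identity, I would insert $z_X^\ell$ and expand: for any $\eta\in\mathbb{R}^n$, write $z_i(\ell)-\eta=(z_i(\ell)-z_X^\ell)+(z_X^\ell-\eta)$ and compute
\begin{align*}
\sum_{i\in X}\|z_i(\ell)-\eta\|_{Q_i(\ell)}^2 &= \sum_{i\in X}\|z_i(\ell)-z_X^\ell\|_{Q_i(\ell)}^2 \\
&\quad + 2(z_X^\ell-\eta)^T\!\!\sum_{i\in X}Q_i(\ell)(z_i(\ell)-z_X^\ell) + \sum_{i\in X}\|z_X^\ell-\eta\|_{Q_i(\ell)}^2.
\end{align*}
The middle (cross) term vanishes exactly by the identity above. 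Since $Q_i(\ell)\in\mathbb{S}_+^n$ makes the first term on the right nonnegative, the claimed inequality follows immediately by dropping it.

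This is essentially a one-line computation once the cross term is shown to vanish, so there is no real obstacle; the only care needed is to invoke \eqref{eq:zxk} at the right moment to kill the cross term, and to note that since $X\subset\mathcal{M}(\ell)$ each $Q_i(\ell)$ is well-defined and in $\mathbb{S}_+^n$, so that $\sum_{i\in X}Q_i(\ell)$ is invertible and $z_X^\ell$ itself is well-defined. An equivalent framing, which I would mention as a remark, is that the difference $\sum_{i\in X}\|z_i(\ell)-\eta\|_{Q_i(\ell)}^2-\sum_{i\in X}\|z_X^\ell-\eta\|_{Q_i(\ell)}^2$ is independent of $\eta$ (all $\eta$-dependent terms cancel using $\sum_{i\in X}Q_i(\ell)z_i(\ell)=\bigl(\sum_{i\in X}Q_i(\ell)\bigr)z_X^\ell$), so it suffices to verify the inequality at the single choice $\eta=z_X^\ell$, where the left side is a sum of nonnegative quadratic forms and the right side is zero.
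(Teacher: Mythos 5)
Your proposal is correct and is essentially the paper's own argument: the paper likewise uses the identity $\sum_{i\in X}Q_i(\ell)z_X^\ell=\sum_{i\in X}Q_i(\ell)z_i(\ell)$ from \eqref{eq:zxk} to show that the difference of the two sides equals $-\sum_{i\in X}\|z_i(\ell)-z_X^\ell\|_{Q_i(\ell)}^2\le0$, which is just your Pythagorean decomposition rearranged. No gap; nothing further needed.
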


\begin{proof}
Due to \eqref{eq:zxk}, $\sum_{i\in X}Q_i(\ell)z_X^\ell=\sum_{i\in X}Q_i(\ell)z_i(\ell)$. Therefore, $\sum_{i\in X}(z_X^\ell)^TQ_i(\ell)\eta=\sum_{i\in X}z_i(\ell)^TQ_i(\ell)\eta$ and $\sum_{i\in X}(z_X^\ell)^TQ_i(\ell)z_X^\ell=\sum_{i\in X}z_i(\ell)^TQ_i(\ell)z_X^\ell$. Because of these two properties, $\sum_{i\in X}\|z_X^\ell-\eta\|_{Q_i(\ell)}^2-\sum_{i\in X}\|z_i(\ell)-\eta\|_{Q_i(\ell)}^2=-\sum_{i\in X}\|z_i(\ell)-z_X^\ell\|_{Q_i(\ell)}^2\le0$.
\end{proof}

\begin{lemma}\label{lem:zitozX<=zitoeta}
For any $\ell\in\mathbb{N}$, any nonempty $X\subset\mathcal{M}(\ell)$, and any $\eta\in\mathbb{R}^n$, $\sum_{i\in X}\|z_i(\ell)-z_X^\ell\|_{Q_i(\ell)}^2\le\sum_{i\in X}\|z_i(\ell)-\eta\|_{Q_i(\ell)}^2$.
\end{lemma}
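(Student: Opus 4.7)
The plan is to mimic the algebraic manipulation already used in the proof of Lemma~\ref{lem:zXtoeta<=zitoeta}, but rearrange the identity differently. The key observation is that $z_X^\ell$ is the weighted centroid satisfying $\sum_{i\in X}Q_i(\ell)(z_i(\ell)-z_X^\ell)=0$ by the definition \eqref{eq:zxk}. This ``orthogonality'' property yields a Pythagorean-like decomposition
\begin{align*}
\sum_{i\in X}\|z_i(\ell)-\eta\|_{Q_i(\ell)}^2=\sum_{i\in X}\|z_i(\ell)-z_X^\ell\|_{Q_i(\ell)}^2+\sum_{i\in X}\|z_X^\ell-\eta\|_{Q_i(\ell)}^2,
\end{align*}
in which the cross term $2\sum_{i\in X}(z_i(\ell)-z_X^\ell)^TQ_i(\ell)(z_X^\ell-\eta)$ vanishes because $z_X^\ell-\eta$ is constant over $i$ and $\sum_{i\in X}Q_i(\ell)(z_i(\ell)-z_X^\ell)=0$.

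First I would write $z_i(\ell)-\eta=(z_i(\ell)-z_X^\ell)+(z_X^\ell-\eta)$ inside each norm and expand the sum to exhibit the three terms above; the vanishing of the cross term is exactly the identity $\sum_{i\in X}Q_i(\ell)z_X^\ell=\sum_{i\in X}Q_i(\ell)z_i(\ell)$ used in the proof of Lemma~\ref{lem:zXtoeta<=zitoeta}. Then I would observe that since $Q_i(\ell)\in\mathbb{S}_+^n$ for every $i\in X$ (an initial member's matrix is in $\mathbb{S}_+^n$ by \ref{enu:obse} and the update rules \eqref{eq:QunchangeJI}--\eqref{eq:QchangeJI} preserve positive definiteness of the nonzero entries), we have $\sum_{i\in X}\|z_X^\ell-\eta\|_{Q_i(\ell)}^2\ge 0$, and dropping this nonnegative term gives the desired inequality.

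There is essentially no obstacle here: once the decomposition above is written down, the conclusion is immediate. The only step that requires even a moment's thought is recognizing that the cross term vanishes, and this is already in hand from the computation performed for Lemma~\ref{lem:zXtoeta<=zitoeta}. In fact, the two lemmas can be viewed as equivalent reorganizations of the single identity $\sum_{i\in X}\|z_i(\ell)-\eta\|_{Q_i(\ell)}^2-\sum_{i\in X}\|z_X^\ell-\eta\|_{Q_i(\ell)}^2=\sum_{i\in X}\|z_i(\ell)-z_X^\ell\|_{Q_i(\ell)}^2$, so Lemma~\ref{lem:zitozX<=zitoeta} follows by applying this identity and bounding the second term on the left by zero from above (by its nonnegativity).
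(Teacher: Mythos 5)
Your proof is correct and is essentially identical to the paper's: both rest on the centroid identity $\sum_{i\in X}Q_i(\ell)z_X^\ell=\sum_{i\in X}Q_i(\ell)z_i(\ell)$, which makes the cross term vanish and yields $\sum_{i\in X}\|z_i(\ell)-z_X^\ell\|_{Q_i(\ell)}^2-\sum_{i\in X}\|z_i(\ell)-\eta\|_{Q_i(\ell)}^2=-\sum_{i\in X}\|z_X^\ell-\eta\|_{Q_i(\ell)}^2\le0$. Your closing remark that Lemmas~\ref{lem:zXtoeta<=zitoeta} and~\ref{lem:zitozX<=zitoeta} are two rearrangements of the same Pythagorean identity is exactly how the paper organizes these two proofs.
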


\begin{proof}
Using the two properties in the proof of Lemma~\ref{lem:zXtoeta<=zitoeta}, we have $\sum_{i\in X}\|z_i(\ell)-z_X^\ell\|_{Q_i(\ell)}^2-\sum_{i\in X}\|z_i(\ell)-\eta\|_{Q_i(\ell)}^2=-\sum_{i\in X}\|z_X^\ell-\eta\|_{Q_i(\ell)}^2\le0$.
\end{proof}

Let $\alpha>0$ be such that $Q_i(\ell)-\alpha I\in\mathbb{S}_+^n$ $\forall\ell\in[k,k+h(k)]$ $\forall i\in\mathcal{M}(\ell)$. This and \eqref{eq:Qbound} imply that
\begin{align}
\alpha I<Q_i(\ell)\le\beta I,\quad\forall\ell\in[k,k+h(k)],\;\forall i\in\mathcal{M}(\ell).\label{eq:alphaI<Q<=betaI}
\end{align}
Assume, to the contrary, that \eqref{eq:V<=fV} does not hold, i.e., $V(k+h(k))>\frac{\gamma(k)}{\gamma(k)+1}V(k)$, which, due to Lemma~\ref{lem:optnonincr}, implies that $V(k)>0$. For convenience, let
\begin{align}
\epsilon=\frac{V(k)}{\gamma(k)+1}>0.\label{eq:eps}
\end{align}
Then, $V(k)-V(k+h(k))\le\epsilon$. It follows from Lemma~\ref{lem:optnonincr} that
\begin{align}
V(\ell-1)-V(\ell)\le\epsilon,\quad\forall\ell\in[k+1,k+h(k)].\label{eq:V-V<eps}
\end{align}
Due to \eqref{eq:V-V<eps}, \eqref{eq:VV=sumzzQzz}, and \eqref{eq:alphaI<Q<=betaI},
\begin{align}
&\|z_i(\ell-1)-z_{\mathcal{I}(\ell)\cup\mathcal{L}(\ell)}^{\ell-1}\|^2\le\frac{\epsilon}{\alpha},\nonumber\displaybreak[0]\\
&\quad\forall\ell\in[k+1,k+h(k)],\;\forall i\in\mathcal{I}(\ell)\cup\mathcal{L}(\ell).\label{eq:|z-z|<=eps/alpha}
\end{align}
Next, let $d_i(\ell)=\sum_{j\in C_i(k,\ell)}\|z_j(\ell)-z_{C_i(k,\ell)}^\ell\|_{Q_j(\ell)}^2$ $\forall\ell\ge k$ $\forall i\in\mathcal{M}(\ell)$. In addition, let $m(\ell)$ be the number of distinct sets in the collection $\{C_i(k,\ell)\}_{i\in\mathcal{M}(\ell)}$ $\forall\ell\ge k$. Notice from \eqref{eq:Cinitial} and \eqref{eq:C} that $1\le m(\ell)\le|\mathcal{M}(\ell)|\le M$ $\forall\ell\ge k$ and $m(\ell)\le m(\ell-1)$ $\forall\ell\ge k+1$. Moreover, let $B(\ell)=\{k\}\cup\{k'\in[k+1,\ell]:m(k')<m(k'-1)\}$ $\forall\ell\ge k+1$. Consider the following lemma:

\begin{lemma}\label{lem:dboun}
For each $\ell\in[k,k+h(k)]$,
\begin{align}
d_i(\ell)&\le(\frac{4\beta}{\alpha})^{|B(\ell)|-1}(M+1-m(\ell))\nonumber\displaybreak[0]\\
&\quad\times\Bigl(\prod_{k'\in B(\ell)}(M+1-m(k'))\Bigr)\epsilon,\quad\forall i\in\mathcal{M}(\ell).\label{eq:d}
\end{align}
\end{lemma}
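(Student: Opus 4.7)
I would prove Lemma~\ref{lem:dboun} by induction on $\ell\in[k,k+h(k)]$. The base case $\ell=k$ is immediate from \eqref{eq:Cinitial}, which forces $C_i(k,k)=\{i\}$ for $i\in\mathcal{M}(k)$; hence $z_{C_i(k,k)}^k=z_i(k)$ and $d_i(k)=0$, so \eqref{eq:d} holds vacuously.

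For the inductive step, assume \eqref{eq:d} at $\ell-1$ and write $I=\mathcal{I}(\ell)\cup\mathcal{L}(\ell)$. Enumerate the distinct classes among $\{C_j(k,\ell-1):j\in I\}$ as $C^{(1)},\ldots,C^{(r)}$, and set $S'=((\cup_q C^{(q)})\cup\mathcal{J}(\ell))-\mathcal{L}(\ell)$, the single merged class at time $\ell$. By \eqref{eq:C}, $\mathcal{M}(\ell)$ partitions into $S'$ and the set of agents whose equivalence classes are unaltered. For $i\in\mathcal{M}(\ell)-S'$, nothing relevant to $d_i$ changes, so $d_i(\ell)=d_i(\ell-1)$; since the right-hand side of \eqref{eq:d} is nondecreasing in $\ell$, the bound is inherited automatically.

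The interesting case is $i\in S'$, which I would split according to whether $r=1$ (so $m(\ell)=m(\ell-1)$ and $\ell\notin B(\ell)$) or $r\ge 2$ (so $m(\ell)=m(\ell-1)-r+1$ and $\ell\in B(\ell)$). When $r=1$, the localized conservation identities underlying \eqref{eq:sumQz=sumQz0}, \eqref{eq:sumQ=sumQ0} force $z_{S'}^\ell=z_{C^{(1)}}^{\ell-1}$, and applying Lemma~\ref{lem:zXtoeta<=zitoeta} to $X=\mathcal{J}(\ell)\cup\mathcal{I}(\ell)$ with $\eta=z_{C^{(1)}}^{\ell-1}$ then yields $d_i(\ell)\le d_{j_0}(\ell-1)$ for any $j_0\in C^{(1)}$. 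Since the $\ell$- and $(\ell-1)$-bounds coincide in this subcase, \eqref{eq:d} carries through. When $r\ge 2$, I would apply Lemma~\ref{lem:zitozX<=zitoeta} with $\eta=z_I^{\ell-1}$ to obtain
\[
d_i(\ell)\le\sum_{j\in S'}\|z_j(\ell)-z_I^{\ell-1}\|_{Q_j(\ell)}^2.
\]
By \eqref{eq:zi=zIL} each $j\in\mathcal{J}(\ell)\cup\mathcal{I}(\ell)$ contributes zero, leaving a sum over $j\in\cup_q(C^{(q)}-I)$, for which $z_j(\ell)=z_j(\ell-1)$ and $Q_j(\ell)=Q_j(\ell-1)$. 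Picking a representative $j_q\in I\cap C^{(q)}\ne\emptyset$ for each $q$ and routing $\|z_j(\ell-1)-z_I^{\ell-1}\|$ through the intermediate points $z_{C^{(q)}}^{\ell-1}$ and $z_{j_q}(\ell-1)$ via the triangle inequality, the three resulting squared distances are controlled, respectively, by the inductive bound on $d_{j_q}(\ell-1)$, by $d_{j_q}(\ell-1)/\alpha$ (applying $\alpha I<Q_{j_q}(\ell-1)$ to a single summand of $d_{j_q}(\ell-1)$), and by $\epsilon/\alpha$ via \eqref{eq:|z-z|<=eps/alpha}. Converting between Euclidean and $Q_j$-norms using \eqref{eq:alphaI<Q<=betaI} and summing over $q$ and $j\in C^{(q)}-I$ then yields the target bound at time $\ell$.

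The main obstacle will be the constant bookkeeping in the $r\ge 2$ subcase: one must verify that the ratio $(4\beta/\alpha)^{r-1}(M+1-m(\ell))^2/(M+1-m(\ell-1))$ separating the $\ell$- and $(\ell-1)$-bounds exactly absorbs the $\beta/\alpha$ produced by norm conversion, the numerical constants generated by squaring a three-term triangle inequality, and the summation sizes $|C^{(q)}-I|\le M-r$ aggregated over the $r$ classes. Simultaneous membership changes ($\mathcal{J}(\ell)\ne\emptyset$ or $\mathcal{L}(\ell)\ne\emptyset$) introduce additional accounting---in particular, verifying that the averaged $Q_j(\ell)$ values from \eqref{eq:Qchange} remain compatible with the $Q_j(\ell-1)$-based estimates---but contribute no genuinely new ideas beyond those above.
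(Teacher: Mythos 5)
Your proposal follows essentially the same route as the paper's proof: induction on $\ell$ with $d_i(k)=0$ as the base case, a split into the single-class case (where conservation gives $z_{S'}^\ell=z_{C^{(1)}}^{\ell-1}$ and Lemma~\ref{lem:zXtoeta<=zitoeta} yields $d_i(\ell)\le d_{j_0}(\ell-1)$) and the merging case (where Lemma~\ref{lem:zitozX<=zitoeta} with $\eta=z_{\mathcal{I}(\ell)\cup\mathcal{L}(\ell)}^{\ell-1}$, representatives $j_q\in C^{(q)}\cap(\mathcal{I}(\ell)\cup\mathcal{L}(\ell))$, the triangle inequality, and \eqref{eq:|z-z|<=eps/alpha} are combined exactly as in the paper). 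The constant bookkeeping you defer is precisely what the paper's displayed chain of inequalities carries out, and it closes because the drop $m(\ell-1)-m(\ell)\ge1$ supplies the extra factor $(\tfrac{4\beta}{\alpha})(M+1-m(\ell))$ needed to absorb the norm conversions, the squared triangle-inequality constants, and the cardinality $|C_i(k,\ell)|\le M+1-m(\ell)$.
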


\begin{proof}
By induction over $\ell\in[k,k+h(k)]$. Let $\ell=k$. For any $i\in\mathcal{M}(\ell)$, from \eqref{eq:Cinitial}, $C_i(k,\ell)=\{i\}$, which, together with \eqref{eq:zxk}, implies that $z_i(\ell)=z_{C_i(k,\ell)}^\ell$. Hence, $d_i(\ell)=0$ $\forall i\in\mathcal{M}(\ell)$. Since the right-hand side of \eqref{eq:d} is positive, \eqref{eq:d} holds for $\ell=k$. Next, let $\ell\in[k+1,k+h(k)]$ and suppose
\begin{align}
d_i(\ell\!-\!1)&\le(\frac{4\beta}{\alpha})^{|B(\ell-1)|-1}(M+1-m(\ell-1))\nonumber\displaybreak[0]\\
&\quad\times\Bigl(\prod_{k'\in B(\ell-1)\!\!\!\!\!\!\!\!\!\!\!\!\!\!\!\!\!\!\!}(M\!+\!1\!-\!m(k'))\Bigr)\epsilon,\quad\forall i\!\in\!\mathcal{M}(\ell\!-\!1).\label{eq:dl-1}
\end{align}
Below, we show that \eqref{eq:dl-1} implies \eqref{eq:d}. To do so, consider the following two mutually exclusive and exhaustive cases:

{\em Case~(I)}: $\mathcal{I}(\ell)\cup\mathcal{L}(\ell)\subset C_{i^*}(k,\ell-1)$ for some $i^*\in\mathcal{M}(\ell-1)$. Due to \eqref{eq:C}, we have $m(\ell)=m(\ell-1)$, so that $B(\ell)=B(\ell-1)$. Let $i\in\mathcal{M}(\ell)$. Suppose $i\in\mathcal{M}(\ell)-(C_{i^*}(k,\ell-1)\cup\mathcal{J}(\ell))$. Then, due to \eqref{eq:C}, \eqref{eq:z=z}, and \eqref{eq:Q=Q}, $C_i(k,\ell)=C_i(k,\ell-1)$, $z_j(\ell)=z_j(\ell-1)$ $\forall j\in C_i(k,\ell)$, and $Q_j(\ell)=Q_j(\ell-1)$ $\forall j\in C_i(k,\ell)$, implying that $d_i(\ell)=d_i(\ell-1)$. Now suppose $i\in(C_{i^*}(k,\ell-1)\cup\mathcal{J}(\ell))-\mathcal{L}(\ell)$. From \eqref{eq:C}, $C_i(k,\ell)=(C_{i^*}(k,\ell-1)\cup\mathcal{J}(\ell))-\mathcal{L}(\ell)$. Thus, from \eqref{eq:z=z}, \eqref{eq:Q=Q}, \eqref{eq:sumQz=sumQz}, and \eqref{eq:sumQ=sumQ}, we have $\sum_{j\in C_{i^*}(k,\ell-1)}Q_j(\ell-1)z_j(\ell-1)=\sum_{j\in C_i(k,\ell)}Q_j(\ell)z_j(\ell)$ and $\sum_{j\in C_{i^*}(k,\ell-1)}Q_j(\ell-1)=\sum_{j\in C_i(k,\ell)}Q_j(\ell)$. These and \eqref{eq:zxk} indicate that $z_{C_i(k,\ell)}^\ell=z_{C_{i^*}(k,\ell-1)}^{\ell-1}$. It follows from \eqref{eq:zi=zIL}, \eqref{eq:sumQ=sumQ}, \eqref{eq:z=z}, \eqref{eq:Q=Q}, and Lemma~\ref{lem:zXtoeta<=zitoeta} that
\begin{align*}
d_i(\ell)&=\sum_{j\in\mathcal{J}(\ell)\cup\mathcal{I}(\ell)}\|z_j(\ell)-z_{C_{i^*}(k,\ell-1)}^{\ell-1}\|_{Q_j(\ell)}^2\displaybreak[0]\\
&\quad+\sum_{\substack{j\in C_{i^*}(k,\ell-1)\\-(\mathcal{I}(\ell)\cup\mathcal{L}(\ell))}}\|z_j(\ell)-z_{C_{i^*}(k,\ell-1)}^{\ell-1}\|_{Q_j(\ell)}^2\displaybreak[0]\\
&=\sum_{j\in\mathcal{I}(\ell)\cup\mathcal{L}(\ell)}\|z_{\mathcal{I}(\ell)\cup\mathcal{L}(\ell)}^{\ell-1}-z_{C_{i^*}(k,\ell-1)}^{\ell-1}\|_{Q_j(\ell-1)}^2\displaybreak[0]\\
&\quad+\sum_{\substack{j\in C_{i^*}(k,\ell-1)\\-(\mathcal{I}(\ell)\cup\mathcal{L}(\ell))}}\|z_j(\ell-1)-z_{C_{i^*}(k,\ell-1)}^{\ell-1}\|_{Q_j(\ell-1)}^2\displaybreak[0]\\
&\le\sum_{j\in\mathcal{I}(\ell)\cup\mathcal{L}(\ell)}\|z_j(\ell-1)-z_{C_{i^*}(k,\ell-1)}^{\ell-1}\|_{Q_j(\ell-1)}^2\displaybreak[0]\\
&\quad+\sum_{\substack{j\in C_{i^*}(k,\ell-1)\\-(\mathcal{I}(\ell)\cup\mathcal{L}(\ell))}}\|z_j(\ell-1)-z_{C_{i^*}(k,\ell-1)}^{\ell-1}\|_{Q_j(\ell-1)}^2\displaybreak[0]\\
&=d_{i^*}(\ell-1).
\end{align*}
It follows from \eqref{eq:dl-1} that \eqref{eq:d} holds for Case~(I).

{\em Case~(II)}: $\mathcal{I}(\ell)\cup\mathcal{L}(\ell)\not\subset C_i(k,\ell-1)$ $\forall i\in\mathcal{M}(\ell-1)$. Due to \eqref{eq:C}, we have $m(\ell-1)-m(\ell)\ge1$ and $B(\ell)=B(\ell-1)\cup\{\ell\}$. Let $i\in\mathcal{M}(\ell)$. Suppose $i\in\mathcal{M}(\ell)-\Bigl(\cup_{j\in\mathcal{I}(\ell)\cup\mathcal{L}(\ell)}C_j(k,\ell-1)\cup\mathcal{J}(\ell)\Bigr)$. Then, observe from \eqref{eq:C}, \eqref{eq:z=z}, and \eqref{eq:Q=Q} that $C_i(k,\ell)=C_i(k,\ell-1)$, $z_j(\ell)=z_j(\ell-1)$ $\forall j\in C_i(k,\ell)$, and $Q_j(\ell)=Q_j(\ell-1)$ $\forall j\in C_i(k,\ell)$. Hence, $d_i(\ell)=d_i(\ell-1)$. Because of this and \eqref{eq:dl-1}, and because $\frac{4\beta}{\alpha}>1$, we have $d_i(\ell)\le(\frac{4\beta}{\alpha})^{|B(\ell)|-1}(M+1-m(\ell))\Bigl(\prod_{k'\in B(\ell)}(M+1-m(k'))\Bigr)\epsilon$. Now suppose $i\in\Bigl(\cup_{j\in\mathcal{I}(\ell)\cup\mathcal{L}(\ell)}C_j(k,\ell-1)\cup\mathcal{J}(\ell)\Bigr)-\mathcal{L}(\ell)$. Also, write $\{C_j(k,\ell-1)\}_{j\in\mathcal{I}(\ell)\cup\mathcal{L}(\ell)}$ as $\{C_{j_1}(k,\ell-1),C_{j_2}(k,\ell-1),\ldots,C_{j_p}(k,\ell-1)\}$, where $2\le p\le m(\ell-1)$. Then, from \eqref{eq:C},
\begin{align}
C_i(k,\ell)=\Bigl(\cup_{q=1}^p C_{j_q}(k,\ell-1)\cup\mathcal{J}(\ell)\Bigr)-\mathcal{L}(\ell).\label{eq:C=cupCcupJ-D}
\end{align}
Let $s_q\in C_{j_q}(k,\ell-1)\cap(\mathcal{I}(\ell)\cup\mathcal{L}(\ell))$ $\forall q\in\{1,2,\ldots,p\}$. Then, because of Lemma~\ref{lem:zitozX<=zitoeta}, \eqref{eq:C=cupCcupJ-D}, \eqref{eq:alphaI<Q<=betaI}, \eqref{eq:z=z}, \eqref{eq:zi=zIL}, the triangle inequality, \eqref{eq:|z-z|<=eps/alpha}, and \eqref{eq:dl-1}, we have
\begin{align*}
d_i(\ell)&\le\beta\sum_{\substack{j\in(\cup_{q=1}^pC_{j_q}(k,\ell-1)\\ \cup\mathcal{J}(\ell))-\mathcal{L}(\ell)}}\|z_j(\ell)-z_{\mathcal{I}(\ell)\cup\mathcal{L}(\ell)}^{\ell-1}\|^2\displaybreak[0]\\
&=\beta\sum_{q=1}^p\sum_{\substack{j\in C_{j_q}(k,\ell-1)\\ -(\mathcal{I}(\ell)\cup\mathcal{L}(\ell))}}\|z_j(\ell-1)-z_{\mathcal{I}(\ell)\cup\mathcal{L}(\ell)}^{\ell-1}\|^2\displaybreak[0]\\
&\le\beta\sum_{q=1}^p\sum_{\substack{j\in C_{j_q}(k,\ell-1)\\ -(\mathcal{I}(\ell)\cup\mathcal{L}(\ell))}}(\|z_j(\ell-1)-z_{s_q}(\ell-1)\|\displaybreak[0]\\
&\quad+\|z_{s_q}(\ell-1)-z_{\mathcal{I}(\ell)\cup\mathcal{L}(\ell)}^{\ell-1}\|)^2\displaybreak[0]\\
&\le\beta\sum_{q=1}^p\sum_{\substack{j\in C_{j_q}(k,\ell-1)\\ -(\mathcal{I}(\ell)\cup\mathcal{L}(\ell))}\!\!\!\!\!\!\!\!\!\!\!\!\!\!\!\!\!\!\!\!\!}2\Bigl((\|z_j(\ell\!-\!1)\!-\!z_{C_{j_q}(k,\ell-1)}^{\ell-1}\|\!+\!\|z_{C_{j_q}(k,\ell-1)}^{\ell-1}\displaybreak[0]\\
&\quad-z_{s_q}(\ell-1)\|)^2+\|z_{s_q}(\ell-1)-z_{\mathcal{I}(\ell)\cup\mathcal{L}(\ell)}^{\ell-1}\|^2\Bigr)\displaybreak[0]\\
&\le\beta\sum_{q=1}^p\sum_{\substack{j\in C_{j_q}(k,\ell-1)\\ -(\mathcal{I}(\ell)\cup\mathcal{L}(\ell))}}2\Bigl(2(\|z_j(\ell-1)-z_{C_{j_q}(k,\ell-1)}^{\ell-1}\|^2\displaybreak[0]\\
&\quad+\|z_{s_q}(\ell-1)-z_{C_{j_q}(k,\ell-1)}^{\ell-1}\|^2)+\frac{\epsilon}{\alpha}\Bigr)\displaybreak[0]\\
&\le\beta\sum_{q=1}^p\sum_{\substack{j\in C_{j_q}(k,\ell-1)\\ -(\mathcal{I}(\ell)\cup\mathcal{L}(\ell))}}2\Bigl(\frac{2}{\alpha}d_{j_q}(\ell-1)+\frac{\epsilon}{\alpha}\Bigr)\displaybreak[0]\\
&\le|C_i(k,\ell)|\Bigl((\frac{4\beta}{\alpha})^{|B(\ell-1)|}(M+1-m(\ell-1))\displaybreak[0]\\
&\quad\times\Bigl(\prod_{k'\in B(\ell-1)}(M+1-m(k'))\Bigr)\epsilon+\frac{2\beta}{\alpha}\epsilon\Bigr)\displaybreak[0]\\
&\le|C_i(k,\ell)|(\frac{4\beta}{\alpha})^{|B(\ell)|-1}(M+1-m(\ell))\displaybreak[0]\\
&\quad\times\Bigl(\prod_{k'\in B(\ell-1)}(M+1-m(k'))\Bigr)\epsilon\displaybreak[0]\\
&=|C_i(k,\ell)|(\frac{4\beta}{\alpha})^{|B(\ell)|-1}\Bigl(\prod_{k'\in B(\ell)}(M+1-m(k'))\Bigr)\epsilon.
\end{align*}
This, along with the fact that $|C_i(k,\ell)|\le M+1-m(\ell)$, implies that $d_i(\ell)\le(\frac{4\beta}{\alpha})^{|B(\ell)|-1}(M+1-m(\ell))\Bigl(\prod_{k'\in B(\ell)}(M+1-m(k'))\Bigr)\epsilon$. Therefore, \eqref{eq:d} holds for Case~(II).
\end{proof}

Since $C_i(k,k+h(k))=\mathcal{M}(k+h(k))$ $\forall i\in\mathcal{M}(k+h(k))$, we have $m(k+h(k))=1$. Moreover, $|B(k+h(k))|\le\min\{h(k)+1,M\}$ and $\Pi_{k'\in B(k+h(k))}(M+1-m(k'))\le\min\{M^{h(k)+1},M!\}$. Furthermore, note from \eqref{eq:sumPz=sumq}, \eqref{eq:sumQz=sumQz0}, \eqref{eq:sumQ=sumQ0}, \eqref{eq:zinitial}, and \eqref{eq:Qinitial} that $z=z_{\mathcal{M}(\ell)}^\ell$ $\forall\ell\in\mathbb{N}$, implying that $d_i(k+h(k))=V(k+h(k))$ $\forall i\in\mathcal{M}(k+h(k))$. It follows from Lemma~\ref{lem:dboun} and \eqref{eq:eps} that $V(k+h(k))\le\gamma(k)\epsilon\le\frac{\gamma(k)}{\gamma(k)+1}V(k)$, which contradicts the assumption that \eqref{eq:V<=fV} is violated. Consequently, \eqref{eq:V<=fV} holds.

\section*{Proof of Theorem~\ref{thm:SEasymconv}}

Let $\mathcal{A}$ be given. Suppose the agent network is connected under $\mathcal{A}$, i.e., $h(k)<\infty$ $\forall k\in\mathbb{N}$, and $\{Q_i(k)\}_{k\in\mathbb{N},i\in\mathcal{M}(k)}$ is uniformly positive definite under $\mathcal{A}$. Let $\alpha>0$ be such that $Q_i(k)-\alpha I\in\mathbb{S}_+^n$ $\forall k\in\mathbb{N}$ $\forall i\in\mathcal{M}(k)$. Then, \eqref{eq:limz=z} holds if and only if $\lim_{k\rightarrow\infty}V(k)=0$. To show that $\lim_{k\rightarrow\infty}V(k)=0$, note from \eqref{eq:V} and Lemma \ref{lem:optnonincr} that $(V(k))_{k=0}^{\infty}$ is nonnegative and non-increasing. Thus, $\exists c\ge0$ such that $\lim_{k\rightarrow\infty}V(k)=c$. To show that $c=0$, assume, to the contrary, that $c>0$. Let $\epsilon=\frac{c}{\gamma(k)}$, where $\gamma(k)$ is defined in Lemma~\ref{lem:SEVexpdecr}. Then, $\exists k\in\mathbb{N}$ such that $c\le V(\ell)<c+\epsilon$ $\forall\ell\ge k$. However, by Lemma~\ref{lem:SEVexpdecr}, we have $V(k+h(k))<\frac{\gamma(k)}{\gamma(k)+1}(c+\epsilon)=c$, which contradicts the inequality $c\le V(\ell)$ for $\ell=k+h(k)$. Therefore, $c=0$, i.e., $\lim_{k\rightarrow\infty}V(k)=0$, so that \eqref{eq:limz=z} holds.

\section*{Proof of Theorem~\ref{thm:SEconvrate}}

Let $\mathcal{A}$ be given. Suppose the agent network is uniformly connected under $\mathcal{A}$, i.e., $h^*<\infty$, and $\{Q_i(k)\}_{k\in\mathbb{N},i\in\mathcal{M}(k)}$ is uniformly positive definite under $\mathcal{A}$. Let $\alpha>0$ be such that $Q_i(k)-\alpha I\in\mathbb{S}_+^n$ $\forall k\in\mathbb{N}$ $\forall i\in\mathcal{M}(k)$. Note that $\gamma^*\ge\gamma(k)$ $\forall k\in\mathbb{N}$. Then, it follows from \eqref{eq:hstar}, Lemma~\ref{lem:optnonincr}, and Lemma~\ref{lem:SEVexpdecr} that $\forall\ell\in\mathbb{N}$, $V((\ell+1)h^*)\le V(\ell h^*+h(\ell h^*))\le\frac{\gamma^*}{\gamma^*+1}V(\ell h^*)$, which implies that $V(\ell h^*)\le\Bigl(\frac{\gamma^*}{\gamma^*+1}\Bigr)^\ell V(0)$. Due again to Lemma~\ref{lem:optnonincr}, \eqref{eq:V<=V0f} holds. In addition, from \eqref{eq:V}, $\alpha\|z_i(k)-z\|^2\le V(k)$ $\forall k\in\mathbb{N}$ $\forall i\in\mathcal{M}(k)$. Therefore, \eqref{eq:|z-z|<=V0/alphaf} is satisfied.

\bibliographystyle{IEEEtran}
\bibliography{paper_r1}

\end{document}